\def\dOi{10(4:18)2014}
\newcommand{\save}[1]{}
\newcommand{\xfig}[2]{\medskip \centerline{\epsfig{figure=#1.eps,height=#2}} \bigskip}
\newcommand{\ul}[1]{\underline{#1}}
\newcommand{\mywhite}[1]{\White{#1}}
\newcommand{\outt}[1]{}
\newcommand{\delete}[1]{}
\newcommand{\blank}[1]{}
\newcommand{\notedomission}[1]{\medskip\noindent{\bf TEXT OMITTED}\\[2mm]}
\newcommand{\bxit}[1]{\hbox{\it #1}}
\newcommand{\bxrm}[1]{\hbox{\rm #1}}
\newcommand{\bxbf}[1]{\hbox{\bf #1}}
\newcommand{\bxsc}[1]{\hbox{\sc #1}}
\newcommand{\bxtt}[1]{\hbox{$\tt #1$}}
\newcommand{\fnrm}[1]{\hbox{\footnotesize #1}}
\newcommand{\scriptbf}[1]{\hbox{\scriptsize\bf #1}}
\newcommand{\scriptrm}[1]{\hbox{\scriptsize\rm #1}}
\newcommand{\bfc}{{\bf c}}
\newcommand{\bfd}{{\bf d}}
\newcommand{\bff}{{\bf f}}
\newcommand{\bfg}{{\bf g}}
\newcommand{\bfq}{{\bf q}}
\newcommand{\bfr}{{\bf r}}
\newcommand{\bfs}{{\bf s}}
\newcommand{\bft}{{\bf t}}
\newcommand{\bfv}{{\bf v}}
\newcommand{\ttd}{\hbox{$\tt d$}}
\newcommand{\tte}{\hbox{$\tt e$}}
\newcommand{\ttf}{\hbox{$\tt f$}}
\newcommand{\tti}{\hbox{$\tt i$}}
\newcommand{\tto}{\hbox{$\tt o$}}
\newcommand{\ttp}{\hbox{$\tt p$}}
\newcommand{\tts}{\hbox{$\tt s$}}
\newcommand{\ttt}{\hbox{$\tt t$}}
\newcommand{\ttu}{\hbox{$\tt u$}}
\newcommand{\ttv}{\hbox{$\tt v$}}
\newcommand{\ttzero}{\hbox{$\tt 0$}}
\newcommand{\ttone}{\hbox{$\tt 1$}}
\newcommand{\bfD}{{\bf D}}
\newcommand{\bfE}{{\bf E}}
\newcommand{\bfG}{{\bf G}}
\newcommand{\bfQ}{{\bf Q}}
\newcommand{\bfT}{{\bf T}}
\newcommand{\feq}{\hbox{\boldmath $\,\doteq\,$}}
\newcommand{\calA}{\hbox{$\cal A$}}
\newcommand{\calC}{\hbox{$\cal C$}}
\newcommand{\calD}{\hbox{$\cal D$}}
\newcommand{\calE}{\hbox{$\cal E$}}
\newcommand{\calF}{\hbox{$\cal F$}}
\newcommand{\calH}{\hbox{$\cal H$}}
\newcommand{\calM}{\hbox{$\cal M$}}
\newcommand{\calN}{\hbox{$\cal N$}}
\newcommand{\calQ}{\hbox{$\cal Q$}}
\newcommand{\calS}{\hbox{$\cal S$}}
\newcommand{\calX}{\hbox{$\cal X$}}
\newcommand{\ttB}{\hbox{$\tt B$}}
\newcommand{\ttD}{\hbox{$\tt D$}}
\newcommand{\ttE}{\hbox{$\tt E$}}
\newcommand{\ttF}{\hbox{$\tt F$}}
\newcommand{\ttG}{\hbox{$\tt G$}}
\newcommand{\ttL}{\hbox{$\tt L$}}
\newcommand{\ttN}{\hbox{$\tt N$}}
\newcommand{\ttS}{\hbox{$\tt S$}}
\newcommand{\ttT}{\hbox{$\tt T$}}
\newcommand{\ttW}{\hbox{$\tt W$}}
\newcommand{\ttZ}{\hbox{$\tt Z$}}
\newcommand{\dB}{\hbox{$\Bbb B$}}
\newcommand{\dN}{\hbox{$\Bbb N$}}
\newcommand{\dW}{\hbox{$\Bbb W$}}
\newcommand{\gra}{\hbox{$\alpha$}}
\newcommand{\grd}{\hbox{$\delta$}}
\newcommand{\gre}{\hbox{$\varepsilon$}}
\newcommand{\grh}{\hbox{$\eta$}}
\newcommand{\grl}{\hbox{$\lambda$}}
\newcommand{\grm}{\hbox{$\mu$}}
\newcommand{\grn}{\hbox{$\nu$}}
\newcommand{\grp}{\hbox{$\pi$}}
\newcommand{\grs}{\hbox{$\sigma$}}
\newcommand{\grt}{\hbox{$\tau$}}
\newcommand{\grf}{\hbox{$\varphi$}}
\newcommand{\grq}{\hbox{$\psi$}}
\newcommand{\grw}{\hbox{$\omega$}}
\newcommand{\grG}{\hbox{$\Gamma$}}
\newcommand{\grD}{\hbox{$\Delta$}}
\newcommand{\grP}{\hbox{$\Pi$}}
\newcommand{\grS}{\hbox{$\Sigma$}}
\newcommand{\bgrx}{\hbox{\boldmath$\xi$}}
\newcommand{\ra}{\rightarrow}
\newcommand{\pa}{\rightharpoonup}
\newcommand{\pra}{\rightharpoonup}	
\newcommand{\sra}{\!\rightarrow\!} 
\newcommand{\splus}{\!+\!}
\newcommand{\sminus}{\!-\!}
\newcommand{\lng}{\langle}
\newcommand{\rng}{\rangle}
\newcommand{\df}{=_{\rm df}}
\newcommand{\rsem}{]\hspace{-0.5mm}]} 
\newcommand{\lsem}{[\hspace{-0.5mm}[} 
\newcommand{\ignore}[1]{}
\newcommand{\mx}{\makebox}
\newcommand{\zero}{\rule{0mm}{3mm}}
\newcommand{\onemm}{\mx[1mm]{}}
\newcommand{\bc}{\begin{center}}
\newcommand{\ec}{\end{center}}
\newcommand{\beq}{\begin{equation}}
\newcommand{\eeq}{\end{equation}}
\newcommand{\be}{\begin{enumerate}}
\newcommand{\ee}{\end{enumerate}}
\newcommand{\bi}{\begin{itemize}}
\newcommand{\ei}{\end{itemize}}
\newcommand{\bd}{\begin{description}}
\newcommand{\ed}{\end{description}}
\newcommand{\beqn}{\begin{equation}}
\newcommand{\eeqn}{\end{equation}}
\newcommand{\beqna}{\begin{eqnarray}}
\newcommand{\eeqna}{\end{eqnarray}}
\newcommand{\beqnas}{\begin{eqnarray*}}
\newcommand{\eeqnas}{\end{eqnarray*}}
\newcommand{\beqnaa}{$$\begin{array}{rcll}}  
\newcommand{\eeqnaa}{\end{array}$$}  
\newcommand{\beqnana}{$$\begin{array}{lrcll}}  
\newcommand{\eeqnana}{\end{array}$$}  
\newcommand{\btbl}[1]{\begin{center}\begin{tabular}{#1}}
\newcommand{\etbl}{\end{tabular}\end{center}}
\newcommand{\beqnc}{$$\begin{array}{rclcl}}
\newcommand{\eeqnc}{\end{array}$$}
\newcommand{\fn}{\footnote}
\def\thmlabel#1{\@bsphack\if@filesw {\let\thepage\relax
\xdef\@gtempa{\write\@auxout{\string
\newlabel{#1}{{\@Roman{\@currentlabel}}{\thepage}}}}}\@gtempa
\if@nobreak \ifvmode\nobreak\fi\fi\fi\@esphack}
\newcommand{\bsl}{\begin{verse}\sl}
\newcommand{\esl}{\end{verse}}
\newcommand{\einference}[2]  
  {\shortstack
      {$ #1 $\\ \mbox{}\\ $ #2 $}}
\newlength{\txtlth}
\newlength{\txtht}
\newcommand{\savetext}[1]{}
\newcommand{\const}{\bxit{Const}}
\newcommand{\decomp}{\bxit{Deconst}}
\newcommand{\deconst}{\bxit{Deconst}}
\newcommand{\IT}{\bxbf{IT}}
\newcommand{\Rset}{\hbox{$H_{\cal C}$}}
\newcommand{\Rstruc}{\hbox{$\calH_{\cal C}$}}
\newcommand{\Dh}{\hbox{$\grD_\eta$}}
\newcommand{\ntp}{{\hbox{\tiny $\Box$}}} 
\keywords{Inductive and coinductive types, equational programs, intrinsic theories, global model theory}
\begin{document}

\title[Semantic coinductive typing]{Global semantic typing\\ for inductive and 
coinductive computing}

\author[D.~Leivant]{Daniel Leivant}
\address{Indiana University Bloomington}
 \email{leivant@indiana.edu}

\begin{abstract}
Inductive and coinductive types are commonly construed as
ontological (Church-style) types, with canonical semantical interpretation.
When studying programs in the context of global
(``uninterpreted") semantics, it is preferable to think of types as
semantical properties (Curry-style).
A purely logical framework for reasoning about semantic types
is provided by intrinsic theories, introduced by the author in 2002,
which fit tightly with syntactic,
semantic, and proof theoretic fundamentals of formal logic,
with potential applications in implicit computational complexity
as well as extraction of programs from proofs.

Intrinsic theories have been considered so far for inductive data,
and we presently extend that framework to data defined using both
inductive and coinductive closures.
Our first main result is a Canonicity Theorem, showing that
the global definition of program typing, via the usual (Tarskian)
semantics of first-order logic,
agrees with their operational semantics in the intended (``canonical") model.

The paper's other main result is a proof theoretic calibration of
intrinsic theories: every intrinsic theory is interpretable in
(a conservative extension of) first-order arithmetic.
This means that quantification over infinite data objects does not lead,
on its own, to proof-theoretic strength beyond that of Peano Arithmetic.
\end{abstract}

\maketitle

\section{Introduction}

\subsection{A motivation: termination of equational programs}

We refer to the well-known dichotomy between the 
{\em canonical} and {\em global} interpretations
of proofs and programs,
often referred to as ``interpreted" and ``uninterpreted," respectively.
The former is exemplified by Peano's Arithmetic, whose canonical
model is the standard structure of the natural numbers with basic operations,
and by programming languages with primitive types for integers, strings, etc.
Thus, the axioms of Peano's Arithmetic (PA) 
are intended to contribute to
the delineation of a particular model,
whereas the axioms of Group Theory are 
intended to describe a class of models, a task they perform successfully
by definition.

The limitative properties of canonical axiomatization and computing,
e.g.\ the high complexity of program termination in the canonical model,
let alone the complexity of semantic truth of first-order formulas
of arithmetic, justify a reconsideration
of canonically intended theories, such as PA, as global theories with
unintended, ``non-standard," models.  Such non-standard models have
``non-standard elements," but the machinery of Tarskian semantics
makes no syntactic distinction between intended and non-standard elements,
and consequently no explicit distinction between
canonical and non-standard models.

A trivial remedy is to enrich the vocabulary with type-identifiers.
Indeed, that is precisely
Peano's original axiomatization of arithmetic \cite{Peano89}:
his context is an abstract universe of objects and sets, and
the natural numbers form a particular collection $N$ within that broader universe.
The type $N$ is thus construed semantically, as a collection of 
pre-existing objects, which happen to satisfy certain properties.
This is in perfect agreement with the brand of
typing introduced by Haskell Curry \cite{Curry36,Seldin-Curry}:
a function $f$ has type $\grt\sra\grs$ if it maps objects of type 
\grt\ to objects of type \grs; $f$ may well be defined for
input values that are not of type \grt.

Semantic types reflect a global perspective, in that they can be
considered for any domain of discourse.
In contrast, Church's approach \cite{Church40} construes
types as inherent properties of objects:
a function is of type $\grt \sra\grs$ when its domain consists of
the objects of \grt, and its codomain of objects
of type \grs.
That is, Church's types are related to the presence of a canonical model.\fn{The difference 
between semantic and ontological typing disciplines is thus
significant in ways that phrases such as ``explicit" and ``implicit" do not convey.}

The distinction between ontological and semantic typing can be made
for arbitrary inductive data types $D$,
such as the booleans, strings, finite trees, and lists of natural numbers.
Each inductive data-type is contained in the term algebra
generated by a set \calC\ of constructors, 
a syntactic representation that suggests a global semantics
for such types.
Namely, $D$ is a ``global predicate," that assigns to
each structure \calS\ (for a vocabulary containing \calC)
the set of denotations of closed \calC-terms.
Such global semantics is a well-known organizing 
principle for descriptive and computational devices over a class 
of structures, such as all finite graphs 
\cite{Ebbinghaus-Flum-FMT,Gurevich-challenge}.

The global viewpoint of types is of particular interest with respect to
{\em programs} over inductive data.
Each such program $P$ may be of type $D \sra D$ in one structure
and not in another; e.g.\ if $P$ is non-total on \dN, then it
is of type $\ttN \ra \ttN$ in the flat-domain structure
with \ttN\ interpreted as $\dN_\bot$, but not when \ttN\ is
interpreted as \dN.

In \cite{Leivant-intrinsic}
we showed that a program $P$ computes a total function in the canonical 
structure
iff $P$ has a unique solution, with respect to Tarskian semantics, 
in {\em every} reasonable model of $P$.
See \S\ref{subsec:inductive-canonicity} below for background and discussion.

Within the global framework, it makes
sense to consider formal theories for proving global
typing properties of equational programs. 
We focus here on equational programs, since these mesh directly 
with formal reasoning:
a program's equations can be construed as axioms,
computations as derivations in equational logic, and types as formulas.
Moreover, equational programs are amenable to
term-model constructions, which turn out to be a 
useful meta-mathematical tool.
Theories for reasoning directly about equational programs 
were developed in \cite{Leivant-intrinsic},
where they were dubbed {\em intrinsic theories.}
Among other benefits, they support
attractive proof-theoretic characterizations of major complexity
classes, such as the provable functions of Peano Arithmetic and the
primitive recursive functions \cite{Leivant-intrinsic,Leivant-unipolar}.

The rest of the paper is organized as follows.
In \S 2 we define
{\em data-systems,} i.e.\ collections of data-types
obtained by both inductive and coinductive
definitions.  Starting with the syntactic framework,
which generalizes term algebras to potentially-infinite {\em hyper-terms,}
we give an operational semantics for equational
programs over hyper-terms.
Section 3 describes the equational programs we wish to focus on, 
including their semantics.  Section 4 presents a proof of our first
main result, a {\em Canonicity Theorem} \ref{thm:canon} 
matching the Tarskian semantics of equational
programs with their operational semantics.
Section 5 describes intrinsic theories, a simple proof theoretic 
setting for reasoning about equational and typing properties of
equational programs.  Finally, \S 6 presents our other 
main theorem \ref{thm:faithful}, stating that intrinsic
theories, even in the presence of coinductive types, are interpretable
in a conservative extension of Peano's Arithmetic, and are thus of the
same proof theoretic strength as Peano's Arithmetic.\fn{The reader 
familiar with rich type systems, such as those of Coq,
Agda, or Nuprl, will notice that Theorem \ref{thm:faithful}
is stated for a type system without infinite-branching type-constructors,
such as W-types.}

\section{Data systems}

\subsection{Symbolic data}

A {\em constructor-vocabulary} is a finite set \calC\ of function
identifiers, referred to as {\em constructors,} each assigned an {\em arity}
$\geq 0$; as usual, constructors of arity 0 are {\em object-identifiers}.
Given a constructor-vocabulary \calC,
a {\em hyper-term (over \calC)} is an ordered tree 
of constructors, possibly infinite, where each node with
constructor \bfc\ of arity $r$ has exactly $r$ children.
We write \Rset\ for the set of hyper-terms over \calC.
For a structure \calS, we write $|\calS|$ for the universe of the
structure.
 
The {\em replete \calC-structure} is the structure \Rstruc\ with\fn{We use
typewriter font for actual identifiers, boldface for meta-level variables
ranging over syntactic objects, and italics for other meta-level variables.}
\be
\item
\zero\calC\ as vocabulary;
\item
\zero $|\Rstruc| = \Rset$; and
\item
a syntactic interpretation for each identifier $\bfc \in \calC$:
$\lsem\bfc\rsem(a_1 \ldots a_r)$ \quad is the tree with \bfc\ at the root 
and $a_1 \ldots a_r$ as immediate sub-trees.
\ee

\subsection{Inductive data systems}
An inductive type is defined by its generative closure
rules.  For example, the rules for \dN\ are $\ttN(\ttzero)$
and $\forall x \; \ttN(x) \ra \ttN(\tts(x))$
(we'll often omit the universal quantifier in the statements of
such rules).
Similarly, words in $\{0,1\}^*$, construed as terms generated from 
a nullary constructor \tte\ and unary \ttzero\ and \ttone,
are generated by the three rules
$\ttW(\tte)$, 
$\forall x \; \ttW(x) \ra  \ttW(\ttzero(x))$,
and $\ttW(x) \ra  \ttW(\ttone(x))$.
If \ttG\ names a type $G$,
then the type $T$ of binary trees with leaves in $G$
is generated by the rules $\ttG(x) \ra \ttT(x)$, and $\ttT(x) \wedge \ttT(y) \ra
\ttT(\ttp(x,y))$, where \ttp\ is a binary constructor (for pairing).

Several types can be generated jointly 
(i.e.\ simultaneously), for example:
the set of \bxtt{01}-strings with no adjacent \ttone's is obtained
by defining jointly the set (denoted by \ttE) of such strings that 
start with \ttone, and the set (denoted by \ttZ) of those that don't:
$\ttZ(\tte)$,
$\ttZ(x) \ra \ttZ(\ttzero(x))$,
$\ttZ(x) \ra \ttE(\ttone(x))$,
and 
$\ttE(x) \ra \ttZ(\ttzero(x))$.

Generally, a 
{\em definition of inductive types from given types $\vec{\bfG}$}
consists of:
\be
\item
A sequence $\vec{\bfD}= (\bfD_1 \ldots \bfD_k)$ 
of unary relation-identifiers, dubbed {\em type identifiers};
\item
A set of {\em construction rules,} each one of the form
\begin{equation}\label{eq:induction-rules}
\forall \vec{y} \; \bfQ_1(y_1) \wedge 
	\cdot\cdot\cdot \wedge \bfQ_r(y_r) \; \ra \;
	\bfD_i(\bfc(y_1 \cdots y_r))
\end{equation}
where \bfc\ is a constructor of arity $r$, and each $\bfQ_\ell$ is one of 
the type-identifiers in $\vec{\bfG}, \vec{\bfD}$.
\ee
These rules delineate the intended meaning of the inductive types 
$\vec{\bfD}$ from below,
as $\bfD_i$ is built up by the construction rules.

Conjoining the composition rules, we obtain a single.
The following variant, equivalent to that conjunction in
constructive (intuitionistic) first-order logic, will be useful:

\begin{equation}\label{eq:induction-packaged}
\grq_1 \vee \cdot \cdot \cdot \vee \grq_k  \;\ra\;  \bfD_i(x) 
\end{equation}
where each $\grq_i$, with $x$ a free variable, is of the form
\begin{equation}\label{eq:constructor-statement}
	\exists y_1 \ldots y_r \;\; x = \bfc(\vec{y})
	\; \wedge \; \bfQ_1(y_1) \; \wedge \cdot\cdot\cdot \wedge \bfQ_r(y_r)
\end{equation}
where $y_1 \ldots y_r$ are distinct variables.
We call a formula of the form (\ref{eq:constructor-statement})
a {\em constructor-statement (for $x$)}.

To focus on the essentials, we do not consider
several important type constructions,
such as parametric types, dependent types, sum types, polymorphism,
and W-types.

\subsection{Coinductive deconstruction rules}

Inductive construction rules state sufficient 
reasons for asserting that a 
(finite) hyper-term has a given type,
given the types of its immediate sub-terms.
The intended semantics of an inductive type $D$ is thus
the smallest set of hyper-terms closed under those rules.
Coinductive deconstruction rules state necessary conditions for a term 
to have a given type, by implying possible combinations
for the types of its immediate sub-terms.
The intended semantics is the largest set of hyper-terms satisfying 
those conditions.

For instance, the type of \grw-words over 0/1 is given by the 
deconstruction rule
\begin{equation}\label{eq:decompose}
\ttW^\omega(x) \ra 
	(\exists y \; \ttW^\omega(y) \wedge x=\ttzero (y) )
	\vee
	(\exists y \; \ttW^\omega(y) \wedge x=\ttone (y) )
\end{equation}
Note that this is not quite captured by the implications \quad
$\ttW^\omega(\ttzero x) \ra \ttW^\omega(x)$ \quad and \quad
$\ttW^\omega(\ttone x) \ra \ttW^\omega(x)$,
since these do not guarantee that every element
of $\ttW^\omega$ is of one of the two forms considered.

Moreover, using a destructor function in stating deconstruction
rules fails to differentiate between cases of the argument's
main constructor.   
For example, in analogy to the inductive definition above
of the words with no adjacent \ttone's,
the \grw-words over 0/1 with no adjacent \ttone's
are delineated jointly by the two deconstruction rules
$$
\ttZ(x) \; \ra \;
	 (\exists y \; \ttZ(y) \wedge x=\ttzero(y))
	\; \vee \;
	 (\exists y \; \ttE(y) \wedge x=\ttzero(y))
$$
and
$$
\ttE(x) \; \ra \; \exists y \; \ttZ(y) \wedge x=\ttone(y)
$$
These rules cannot be captured using a destructor, since the latter
does not differentiate between cases for the input's main constructor.

These observations motivate the following.
\begin{defi}
A {\em deconstruction definition of coinductive types from
given types $\vec{\bfG}$} consists of:
\be
\item
A list $\vec{\bfD}$ of {\em type identifiers};
\item
for each of the types $\bfD_i$ in $\vec{\bfD}$ a 
{\em deconstruction rule,} of the form 
\begin{equation}\label{eq:coinduction}
\bfD_i(x) \; \ra \; \grq_1 \vee \cdot \cdot \cdot \vee \grq_k
\end{equation}
where each $\grq_i$ is a constructor-statement (as in (\ref{eq:constructor-statement}) above).
\ee
\end{defi}

\subsection{General data-systems}
We proceed to define data-systems, in which data-types may 
be defined by any sequential nesting of induction and coinduction.
Descriptive and deductive tools for such definitions have been 
studied extensively,
e.g.\ referring to typed lambda calculi, with operators
\grm\ for smallest fixpoint and \grn\ for greatest fixpoint.  
The Common Algebraic Specification Language 
\bxsc{Casl} was used as a unifying standard in the algebraic 
specification community, and extended to coalgebraic data
\cite{Reichel99,RotheTJ01,MossakowskiSRR06,Schroder08}.
Several frameworks combining inductive and coinductive data,
such as \cite{Padawitz00}, strive to be comprehensive, including
various syntactic distinctions and categories, in contrast to our
minimalistic approach.  

\begin{defi}
A {\em data-system} \calD\ over a constructor vocabulary \calC\ 
consists of:
\be
\item
A double-list $\vec{\bfD}_1 \ldots \vec{\bfD}_k$ (the order matters)
of unary relation-identifiers, dubbed {\em type-identifiers}, where
each $\vec{\bfD}_i$ is a {\em type-bundle,}
and designated as either {\em inductive} or {\em coinductive.}
\item
For each inductive bundle $\vec{\bfD}_{i}$, an inductive definition
of $\, \vec{\bfD}_{i}$ from the types in $\vec{\bfD}_{j}$, $j< i$.

\item
For each coinductive bundle $\vec{\bfD}_{i}$ 
a {\em coinductive definition} of $\vec{\bfD}_{i}$ from 
$\vec{\bfD}_{j}$, $j< i$.
\ee
\end{defi}

\begin{defi}\label{dfn:type-rank}
We say that a data-system $\vec{\bfD}_1 \ldots \vec{\bfD}_k$ 
is $\grS_n$ ($\grP_n$) if $\vec{\bfD}_k$ is inductive 
(respectively, coinductive), and the list of bundles alternates
$n\sminus 1$ times between inductive and coinductive bundles.
(This choice of notation will become evident in Theorem
\ref{thm:type-dfn}.)
That is, a single bundle is $\grS_1$ ($\grP_1$) if it is inductive 
(respectively, coinductive); 
if $\vec{\bfD}_1 \ldots \vec{\bfD}_k$ is $\grS_n$ then
$\vec{\bfD}_1, \ldots, \vec{\bfD}_k, \vec{\bfD}_{k+1}$ 
is $\grS_n$ if $\vec{\bfD}_{k+1}$ 
is inductive, and
$\grP_{n+1}$ if $\vec{\bfD}_{k+1}$ is coinductive;
if $\vec{\bfD}_1 \ldots \vec{\bfD}_k$ is $\grP_n$ then
$\vec{\bfD}_1, \ldots, \vec{\bfD}_k, \vec{\bfD}_{k+1}$ 
is $\grP_n$ if $\vec{\bfD}_{k+1}$ 
is coinductive, and
$\grS_{n+1}$ if $\vec{\bfD}_{k+1}$ is inductive.

A data system $\calD = \vec{\bfD}_1 \ldots \vec{\bfD}_k$  
has {\em rank $n$} if it is $\grS_n$ or $\grP_n$.
A data-type $\bfD_{ij}$ of \calD\ has rank $n$ (in \calD) 
if the data-system $\vec{\bfD}_1 \ldots \vec{\bfD}_i$ has rank $n$. 
\end{defi}

\subsection{Examples of data-systems}

\be
\item
Let \calC\ consist of the identifiers
\ttzero, \ttone, \tte, \tts, and \ttp, of arities 0,0,0,1, and 2,
respectively.
Consider the following $\grS_3$ data-system, for the double list 
$((\ttB),(\ttN),(\ttF,\ttS),(\ttL))$ with inductive \ttB\ and \ttN\
(booleans and natural numbers), coinductive \ttF and \ttS\
(streams with alternating \ttB's and \ttN's starting with \ttB\ or,
respectively, \ttN),
and finally an inductive \ttL\ for lists of such streams.
The defining formulas are, in simplified form,
\[\begin{array}{c}
\ttB(\ttzero) \quad \ttB(\ttone) \\[2mm]
\ttN(\ttzero) \quad
\forall y \; \ttN(y) \ra \ttN(\tts (y)) \\[2mm]
\ttF(x) \ra \exists y,z \; (x= \ttp (y, z))
	\;\wedge\; \ttB(y)\; \wedge\; \ttS(z) \\[1mm]
\qquad \ttS(x) \ra \exists y,z \; (x= \ttp( y, z) )
	\;\wedge\; \ttN(y)\; \wedge\; \ttF(z) \\[2mm]
\ttL(\tte) \quad
\forall y,z  \; \ttF(y) \wedge \ttL(z) \ra \ttL(\ttp (y, z)) \quad
\forall y,z \; \ttS(y) \wedge \ttL(z) \ra \ttL(\ttp (y, z))
\end{array}
\]

\noindent Note that constructors \ttp\ and \ttzero\ are reused for different data-types.
This is in agreement with our untyped, generic approach, where the
data-objects are untyped.

\item\label{ex:DT}
Let the constructors be \ttzero, \ttone, \tts, \ttp, and \ttd,
of arities 0,0,1,2 and 3 respectively.
Consider the $\grP_2$ data system $((\ttN),(\ttT),(\ttD))$, 
with inductive \ttN\ (natural numbers), coinductive \ttT\ (finite
or infinite 2-3 trees with leaves in \ttN), 
and coinductive \ttD\ (infinite binary 
trees with nodes decorated by elements of \ttT).
The inductive definition of \ttN\ is as above;
the coinductive definitions of \ttT\ and \ttD\ are
$$
\begin{array}{lcl}
\ttT(x) & \ra & \ttN(x) \\
	&& \quad \vee \; (\exists y_1,y_2 \;\; x=\ttp(y_1,y_2) \; 
		\wedge \; \ttT(y_1) \; \wedge \; \ttT(y_2))\\
	&& \quad \vee \; (\exists y_1,y_2,y_3 \;\; x=\ttd(y_1,y_2,y_3) \; 
		\wedge \; \ttT(y_1) \; \wedge \; \ttT(y_2) \; 
		\wedge \; \ttT(y_3))
\end{array}
$$
and
$$
\begin{array}{lcl}
\ttD(x) & \ra & \exists u,y_1,y_2 \;\; x=\ttd(u,y_1,y_2)  \;\;
	\wedge \;\; \ttT(u) \;\; \wedge \;\; \ttD(y_1) \;\; 
	\wedge \;\; \ttD(y_2)
\end{array}
$$
Note that we construe a ``tree of trees" not as a higher-order object, 
but simply as a tree of constructors, suitably parsed.
\ee

\section{Programs over data-systems}

\subsection{Equational programs}

In addition to the set \calC\ of constructors we posit
an infinite set \calX\ of {\em variables,} and an infinite 
set \calF\ of function-identifiers, dubbed {\em program-functions,} and 
assigned arities $\geq 0$ as well. The sets \calC, \calX\ and \calF\ are,
of course, disjoint.  
If \calE\ is a set consisting
of function-identifiers and (possibly) variables, we write $\bar{\calE}$
for the set of terms generated from \calE\ by application:
if $\bfg \in \calE$ is a function-identifier of arity $r$, and 
$\bft_1 \ldots \bft_r$
are terms, then so is $\bfg\,\bft_1\, \cdots\, \bft_r$.  
We use informally the
parenthesized notation $\bfg(\bft_1, \ldots, \bft_r)$, when 
convenient.\footnote{Note that if \bfg\ is nullary,
it is itself a term,
whereas with formal parentheses we'd have $\bfg()$.}
We refer to elements of $\overline{\calC}$, 
$\overline{\calC\cup\calX}$ and
$\overline{\calC\cup\calX\cup\calF}$ as {\em data-terms,}
{\em base-terms,} and {\em program-terms,} respectively.\footnote{Data-terms 
are often referred to as {\em values}, and base-terms as {\em patterns.}}
We write $|\bft|$ for the height of a term $\bft$.

We adopt equational programs,
in the style of Herbrand-G\"{o}del, as computation model.
See for example \cite{Kleene52} for a classical exposition.
Such programs are sometimes dubbed ``computation rules"
\cite{BergerES03,SchwichtenbergW12}.
There are easy inter-translations between equational programs and
program-terms such as those of $\bxbf{FLR}_0$ \cite{Moschovakis89}.
We prefer however to focus on equational programs because 
they integrate easily into logical calculi, 
and are naturally construed as axioms.
In fact, codifying equations by terms is 
a conceptual detour, since the computational behavior of 
such terms is itself spelled out using equations or rewrite-rules.

A {\em program-equation} is an equation of the
form $\bff(\bft_1, \ldots, \bft_k) = \bfq$, 
where \bff\ is a program-function 
of arity $k \geq 0$, $\bft_1 \ldots \bft_k$ is a list of
base-terms with no variable repeating,
and $\bfq$ is a program-term.
Two program-equations are {\em compatible} if their left-hand 
sides cannot be unified.
A {\em program-body} is a finite set of pairwise-compatible 
program-equations. A {\em program} $(P,\bff)$ (of arity $k$) 
consists of a program-body $P$ and 
a program-function \bff\ (of arity $k$) dubbed the program's 
{\em principal-function.}
We identify each program with its program-body when in no danger 
of confusion.
Given a program $P$, we call the program-terms that use the
function-identifiers occurring in $P$ {\em $P$-terms.}

The requirement that program-equations have no repeating variable
in the input is essential when the input may be infinite, for else
the applicability of such an equation might depend on two inputs
being identical, a condition which is not decidable.

Programs of arity 0 can be used to define objects.  For example, the
singleton program $T$ consisting of the equation
$\ttf = \tts\tts\tts\ttzero$ defines 3, in the sense that in every
model \calS\  of $T$
the interpretation of the identifier \ttt\ is the same
as that of the numeral for 3. 
We can similarly construct nullary programs defining
hyper-terms, such as the program $I$ consisting 
of the single equation $\tti = \tts(\tti)$.
The infinite hyper-term $\tts^\omega$ is the unique hyper-term 
solution of this equation.
But that uniqueness does not extend to arbitrary structures, of course.
For example, we may have \tts\ interpreted as the identity function, and
the equation above is modeled over the ordinals
with \tts\ interpreted as $\grl x. 1+x$, and \tti\ as any infinite ordinal.

\subsection{Operational semantics of programs}

A program $(P,\bff)$
{\em computes} a partial-function \quad
$g: \, \bar{\calC} \pa \bar{\calC}$ \quad 
when $g(p)=q$ iff the equation $\bff(p)=q$
is derivable from $P$ in equational logic.
However, replete structures have infinite terms,
so the output of a program over \Rset\ must be computed
piecemeal from finite information about the input values.

To formally describe computation over infinite data, with
a modicum of syntactic machinery, we posit
that each program over \calC\
has defining equations for destructors and a discriminator.
That is, if the given vocabulary's constructors are
$\bfc_1 \ldots \bfc_k$, with $m$ their
maximal arity, then the program-functions include
the unary identifiers $\grp_{i,m}$ ($i=1..m$) and \grd\
(destructors and discriminator),
and all programs contain, for each constructor \bfc\ of arity $r$
the equations
\begin{equation}\label{eq:destr}
\begin{array}{rcll}
\grp_{i,m}(\bfc(x_1, \ldots, x_r)) &=& x_i & (i=1..r)\\
\grp_{i,m}(\bfc(x_1, \ldots, x_r))
        &=& \bfc(x_1, \ldots, x_r) \quad & (i=r\splus 1.. m)\\[1mm]
\grd(\bfc_i(\vec{\bft}), x_1, \ldots, x_k) &=& x_i  & (i = 1..k)
\end{array}
\end{equation}
We call a repeated composition of destructors a {\em deep-destructor,}
and construe it as an address in hyper-terms.

A {\em valuation} is a function \grh\ from a finite set
of variables to \Rset.
If $\vec{\bfv}$ is a list of $r$ distinct variables, and
$\vec{t}$ a list of $r$ hyper-terms, then
$[\vec{\bfv}\leftarrow \vec{t}]$ is the valuation \grh\ defined
by $\grh(\bfv_i) = t_i$.

We posit the presence in \calC\ of at least
one nullary constructor \tto;  indeed, adding a nullary constructor 
to \calC\ does not impact the rest of the discussion.
For a constructor \bfc\ we write $\bfc^o$ for the
term $\bfc(\tto,\ldots, \tto)$.
For a deep-destructor \grP\ we define\fn{Here again
we stipulate that $\calC = \{ {\bf c}_1 , \ldots {\bf c}_k\}$;
also $r_i = \fnrm{arity}({\bf c}_i)$}
$$
\grP^o(x) = \grd(\grP(x),\bfc_1^o, \ldots , \bfc_k^o)
$$
That is, $\grP^o(x)$ identifies the constructor of $x$ at address \grP.

\begin{defi}\label{dfn:comput}
We say that a set \grG\ of equations {\em locally infer}
an equation $\bft = \bfq$ between program-terms if,
for each deep-destructor \grP,
the equation $\grP^o(\bft) = \grP^o(\bfq)$
is derivable in equational logic from $\grG$.
We write then $\grG \vdash^\omega \bft = \bfq$.

\medskip

The {\em diagram} of a valuation \grh\ is the set \Dh\ of
equations of the form $\grP^o(\ttv)=\bfc^o$
where \ttv\ is in the domain of \grh, 
\grP\ a deep-destructor, and $\bfc$ the main
constructor of $\grP(\grh(\ttv))$.
That is, \Dh\ conveys, node by node, the structure of the
hyper-term $\grh(\ttv)$.

\medskip

An $r$-ary program $(P,\bff)$ 
{\em locally-computes} a partial-function \\
$g: \, \Rset^r \pa \Rset$
\quad when, for every $\vec{t}\in \Rset^r$ and $q \in \Rset$,
$g(t_1,\ldots , t_r) = q$ \quad iff \quad
$P,\, \Dh \vdash^\omega \bff(\ttv_1 \ldots \ttv_r) = \ttu$,
where $\grh = [\vec{\ttv},\ttu \leftarrow \vec{t},q]$.
\end{defi}

The notion of local-computability is motivated
solely by the presence of infinite data. For finite hyper-terms
local-computability is equivalent to computability, as we now show.
For a data-term $\bft$ of \calC\ let $\hat{\bft}$ be
corresponding hyper-term, i.e.\ the syntax-tree of \bft.
\enlargethispage{\baselineskip}

\begin{prop}\label{prop:finite-comput}
Let \bft\ and \bfq\ be data-terms.
\begin{equation}\label{1}
P, \, \grD_{[\scriptrm{u},\scriptrm{v} \leftarrow \hat{\scriptbf{t}},\hat{\scriptbf{q}}]}
	\vdash^\omega \bff(\ttv) = \ttu
\end{equation}
iff
\begin{equation}\label{2}
P \vdash \bff(\bft) = \bfq
\end{equation}
\end{prop}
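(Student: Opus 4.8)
The plan is to read the program $P$ — its program-equations together with the destructor/discriminator equations (\ref{eq:destr}) — as a left-to-right rewrite system. Since the left-hand sides carry no repeating variable (left-linearity) and are pairwise non-unifiable (non-overlap), the system is orthogonal, hence confluent; so $P \vdash s = t$ holds exactly when $s$ and $t$ share a reduct, and every data-term, being a pure constructor term, is a normal form. The one lemma that drives both directions is a \emph{substitution lemma}: for the ground substitution $\sigma = [\ttv,\ttu \leftarrow \bft,\bfq]$, every equation of the diagram $\Dh$ becomes $P$-derivable. Indeed a diagram equation reads $\grP^o(\ttv) = \bfc^o$ with $\bfc$ the constructor of $\bft$ at address $\grP$, and applying $\sigma$ the left side $\grP^o(\bft)$ reduces under (\ref{eq:destr}) — the destructors walk down to the subterm at $\grP$ and $\grd$ reads off its head constructor — precisely to $\bfc^o$. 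This is proved by induction on the length of the deep-destructor $\grP$, and the analogous statement holds for $\ttu,\bfq$.

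For the forward direction, assume (\ref{1}), i.e.\ $P,\Dh \vdash \grP^o(\bff(\ttv)) = \grP^o(\ttu)$ for every deep-destructor $\grP$. Apply $\sigma$ to each of these derivations. Since $\ttv,\ttu$ do not occur in $P$ we have $P\sigma = P$, and by the substitution lemma $P \vdash \Dh\sigma$; hence $P \vdash \grP^o(\bff(\bft)) = \grP^o(\bfq)$ for all $\grP$, that is, the diagram collapses and local inference becomes $P \vdash^\omega \bff(\bft) = \bfq$. It remains to upgrade this address-by-address agreement to the single equation (\ref{2}). Here I would induct on the (finite) height of $\bfq$: writing $\bfq = \bfc(\bfq_1,\ldots,\bfq_r)$, the case $\grP = \id$ gives $P \vdash \grd(\bff(\bft),\bfc_1^o,\ldots,\bfc_k^o) = \bfc^o$, and since the $\bfc_i^o$ are distinct normal forms, confluence forces $\bff(\bft)$ to reduce to a $\bfc$-headed term $\bfc(s_1,\ldots,s_r)$; feeding the destructors $\grp_{i,m}$ then yields $P \vdash^\omega s_i = \bfq_i$ for each child, so by the induction hypothesis $s_i \to^* \bfq_i$ and $\bff(\bft) \to^* \bfq$. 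Finiteness of $\bfq$ is exactly what makes this recursion bottom out.

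For the converse, assume (\ref{2}), so $\bff(\bft) \to^* \bfq$. Fix a deep-destructor $\grP$; since $\grP^o(\ttu) = \bfc^o$ is an equation of $\Dh$ (with $\bfc$ the constructor of $\bfq$ at $\grP$), it suffices to derive $P,\Dh \vdash \grP^o(\bff(\ttv)) = \bfc^o$. The idea is to replay the reduction of $\grP^o(\bff(\bft))$ on $\grP^o(\bff(\ttv))$, using the equations of $\Dh$ (oriented left-to-right) in place of the steps that consult $\bft$: every time the computation inspects the argument it does so through a probe $\grd(\grP'(\ttv),\bfc_1^o,\ldots,\bfc_k^o) = (\grP')^o(\ttv)$, and the matching diagram equation resolves this probe to the very constructor that $\bft$ carries at $\grP'$. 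Replaying the finite reduction in this manner produces a derivation of $\grP^o(\bff(\ttv)) = \bfc^o$ from $P,\Dh$, giving (\ref{1}).

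The main obstacle is this replay step in the converse: one must argue that the computation of a \emph{single} finite observation $\grP^o(\bff(\bft))$ depends on $\bft$ only through finitely many destructor/discriminator probes, each of which is recorded in the diagram, so that the information in $\Dh$ genuinely suffices to reproduce the computation with $\ttv$ in place of $\bft$. This is precisely the design purpose of positing the destructors and discriminator — to force all inspection of an argument to be piecemeal, and hence diagram-recordable — and the finiteness of $\bft$ is what bounds the number of probes consulted along the reduction. Making the correspondence between reduction steps on $\bft$ and $\Dh$-steps on $\ttv$ exact, uniformly in $\grP$, is the technical heart of the proof.
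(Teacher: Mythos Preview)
Your forward direction is sound, and the appeal to confluence is in fact what the paper's terse ``by induction on $\bfq$'' needs in order to pass from $P \vdash^\omega \bff(\bft)=\bfq$ to $P \vdash \bff(\bft)=\bfq$; on that half you are more explicit than the paper.

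The converse has a real gap. Your replay rests on the claim that ``every time the computation inspects the argument it does so through a probe $\grd(\grP'(\ttv),\ldots)$'', but program equations inspect their argument by \emph{pattern matching}, not via the discriminator: an equation $\bff(\bfc(\vec{x}))=\bfr$ fires on $\bff(\ttv)$ only if one has $\ttv=\bfc(\vec{s})$ for some terms $\vec{s}$. The diagram $\Dh$ supplies only observational equations $\grP^o(\ttv)=\bfc^o$, never a structural one $\ttv=\bfc(\ldots)$, and from $\grd(\ttv,\bfc_1^o,\ldots,\bfc_k^o)=\bfc^o$ you cannot recover the latter in equational logic, since $\grd$ is not injective in its first argument. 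Concretely, with $P=\{\,\bff(\tts(x))=\tto\,\}$ and $\bft=\tts(\tto)$, the step $\bff(\bft)\to\tto$ has no counterpart on $\bff(\ttv)$: no rule of $P$ applies to $\bff(\ttv)$, and $\Dh$ does not furnish $\ttv=\tts(\cdot)$. The destructors and discriminator are posited to \emph{define} local inference $\vdash^\omega$, not to replace pattern matching inside programs, so ``forcing all inspection to be piecemeal'' is not something the setup actually achieves. The paper does not attempt a rewrite-level replay; it argues instead by induction on the length of the equational derivation of $\bff(\bft)=\bfq$ from $P$, lifting each derivation step to a local inference from $P,\Dh$. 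That derivation-induction, rather than a simulation of reductions, is the route you are missing.
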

By structural induction on data-terms we have, in equational
logic, and using the defining equations for the destructors
and discriminator (\ref{eq:destr}),
$$
\ttu=\bft, \, \ttv=\bfq \vdash \grD_{[\scriptrm{u},\scriptrm{v} \leftarrow \hat{\scriptbf{t}},\hat{\scriptbf{q}}]}
$$
So (\ref{1}) implies
$$
P, \, \ttu=\bft, \ttv=\bfq \vdash^\omega \bff(\ttv) = \ttu
$$
i.e.
$$
P \vdash^\omega \bff(\bft) = \bfq
$$
By induction on \bfq, and using again (\ref{eq:destr}), this implies
(\ref{2}).

For the converse, assume (\ref{2}).
Using induction on the length of
equational derivations, for all terms $\bfr,\bfs$,
if $\;P \vdash \{\bft,\bfq/\ttu,\ttv\} \, (\bfr = \bfs)\;$ then 
$$
P,\, \grD_{[\scriptrm{u},\scriptrm{v} \leftarrow \hat{\scriptbf{t}},\hat{\scriptbf{q}}]}
	\vdash^\omega  \bfr=\bfs
$$
In particular, we conclude (\ref{1}).

\subsection{Equational vs.\ Turing computation} 

The equivalence of equational programs over \dN\ with the 
\grm-recursive functions was implicit already in \cite{Godel34},
and explicit in \cite{Kleene36}.
Their equivalence with \grl-definability \cite{Church36,Kleene36a}
and hence with Turing computability \cite{Turing37} followed readily.
When equational programs are used over infinite data, a match with Turing machines must
be based on an adequate representation of infinite data by 
functions over inductive data.  
For instance, each infinite 0/1 word $w$ can be identified with 
the function \quad $\hat{w}: \; \dN\ \ra \dB$
\quad defined by $\hat{w}(k) =$ the $k$'th constructor of $w$.
Similarly, infinite binary trees with nodes decorated with 0/1 
can be identified with functions from 
$\dW = \{\ttzero,\ttone\}^*$ to $\{\ttzero,\ttone\}$.
Conversely, a function $f: \; \dN \ra \dB$ can be identified with
the \grw-word $\check{f}$ whose $n$'th entry is $f(n)$.

It follows that a functional $g:\; (\dN\ra \dB) \ra  (\dN \ra \dB)$
can be identified with the function $\check{g}: \; \dB^\omega \ra \dB^\omega$,
defined by $\check{g}(w) = (g(\hat{w}))^{\vee}$.
Conversely, a function $h: \; \dB^\omega \ra \dB^\omega$ can be identified
with the functional \quad $\hat{h}:\; (\dN\ra \dB) \ra  (\dN \ra \dB)$ \quad
defined by \quad $\hat{h}(f)=(h(\check{f}))^\wedge$. 

It is easy (albeit tedious) to see that
a partial function $h: \; \dB^\omega \ra \dB^\omega$ 
is computable by an equational
program iff the functional $\hat{h}$ is computable by 
some oracle Turing machine.
Dually, a functional $g: \; (\dN\ra \dB) \ra  (\dN \ra \dB)$ is computable
by an oracle Turing machine iff the function $\check{g}$ is computable by an
equational program.

\section{Matching Tarskian semantics and operational semantics}

\subsection{\calD-correct structures}

We have focused so far on the canonical setup for data-systems \calD, with
hyper-terms as objects.  We now consider arbitrary structures.
We call a structure \calS\ a {\em \calD-structure} if its 
vocabulary (i.e.\ signature, symbol set) contains
the constructor- and type-identifiers of \calD.
In a \calD-structure \calS\
we may have a finite or infinite regression of
constructor-eliminations, regardless of the nature of the structure
elements.  For example, if \ttf\ is a unary constructor,
and $g = \ttf_{{\cal S}}$ is its interpretation in \calS,
we might have an element $v_0 \in |\calS|$ 
for which there is a $v_1 \in |\calS|$
with $v_0 = g(v_1)$, and more generally elements
$v_i \in |\calS|$ $(i=0,1,\ldots$) where $v_i = g(v_{i+1})$.
In general $g$ need not be injective, and so
$v_{i+1}$ need not be uniquely determined by $v_i$.

We say that a \calD-structure \calS\ is {\em \calD-correct} 
(or just {\em correct} when in no danger of confusion) if
\be
\item \zero \calS\ is {\em separated for \calC,}
that is, the interpretations in \calS\ of the constructors are
all injective and have pairwise-disjoint codomains.\fn{For \dN\ 
these are Peano's Third and Fourth Axioms.}
Note that \Rstruc\ satisfies this property.
\item
If $\vec{\bfD}_{i} = \lng \bfD_{i1} \ldots \bfD_{im} \rng$ is inductive, 
then $\lng \lsem \bfD_{i1}\rsem , \ldots, \lsem \bfD_{im}\rsem \rng $
is the minimal $m$-tuple of subsets of $|\calS|$ closed under 
the construction rules for $\vec{\bfD}_{i}$, given the sets
$\lsem \vec{\bfD}_1\rsem\; \ldots\; \lsem \vec{\bfD}_{i-1}\rsem$
\item
Dually, if $\vec{\bfD}_i$ is coinductive, then
$\lsem \vec{\bfD}_{i} \rsem$ is the largest vector of subsets of $\Rset$
closed under the deconstruction rules for $\vec{\bfD}_{i}$, given the sets
$\lsem \vec{\bfD}_1\rsem\; \ldots\; 
	\lsem \vec{\bfD}_{i-1}\rsem$.
\ee
The {\em canonical model} 
$\calA \equiv \calA_{{\cal D}} \equiv \lsem \calD\rsem$ of a data-system
\calD\ is the \calD-correct expansion of the replete structure \Rstruc.
Note that inductive and coinductive types are given their 
canonical interpretation in every \calD-correct structures, 
but such structures may have elements
that are outside all types.  Indeed, that possibility
is the motivation of intrinsic theories in the first place: one deals with 
``anomalies" of computation (divergence when an inductive output is
expected, non-productiveness when a coinductive output is expected) not by
partiality, but by allowing output which is not typed.
A single element $\bot$ denoting divergence does not suffice (see the proof in 
\cite{Leivant-intrinsic} of Theorem \ref{thm:N-canon} below).

\subsection{Decomposition in data-correct structures}

Let \calS\ be a \calD-structure, and consider an element $a$ of $|\calS|$.
A {\em \calC-decomposition of $a$} is
a finitely-branching tree $T$
of elements of $|\calS| \times \calC$ such that
\be
\item
The root of $T$ is of the form $\lng a,\bfc\rng$ with $\bfc \in \calC$;
\item
if $\lng b_i, \bfc_i\rng$ ($i=1..r$) are the children in $T$ of a node 
$\lng b, \bfc\rng$ of $T$, then
$b = \bfc_{{\cal S}}(b_1,..,b_r)$.
\ee
If $a$ has a \calC-decomposition, we say that it is {\em \calC-decomposable.}
Put differently, $a$ is \calC-decomposable iff it is in the range of a partial
mapping $\grf: \; \Rset \pa |\calS|$ that satisfies $\grf(\bfc(t_1 \ldots t_r))=
\bfc_{{\cal S}}(\grf(t_1) \ldots \grf(t_r))$.

Obviously, an element $a \in |\calS|$ may have multiple 
\calC-decompositions, and even uncountably many:
it suffices to take the structure with two elements $a,b$ and
two constant functions $\grl x.a$ and $\grl x.b$.

Recall that a \calD-structure \calS\ is separated if the interpretations
in \calS\ of the constructors $\bfc \in \calC$ are injective and
with disjoint codomains.

\begin{prop}\label{prop:separated-decomposition}
If \calS\ is a separated structure for \calC, then
each element $a \in |\calS|$ has at most one decomposition.
\end{prop}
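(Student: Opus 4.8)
The plan is to regard each \calC-decomposition of $a$ as a partial function $T$ from \emph{addresses} --- finite sequences of positive integers --- to $|\calS| \times \calC$, where $T(w)$ is the label of the node reached by following the path $w$ from the root, and then to show that any two decompositions of $a$ agree as such functions. If $a$ has no decomposition the statement is vacuous, so I assume $T$ and $T'$ are both \calC-decompositions of $a$ and aim to prove $T = T'$, i.e.\ that they have the same domain and the same labeling.

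First I would record the two consequences of separatedness that drive the argument: each $\bfc_{\calS}$ is injective, and the ranges $\rnge(\bfc_{\calS})$ are pairwise disjoint for distinct $\bfc \in \calC$ (for a nullary \bfc\ the range is the singleton $\{\bfc_{\calS}\}$). The disjointness yields a \emph{label-determination} observation: if $\lng b, \bfc\rng$ is any node of a decomposition, then clause~(2) of the definition, applied to that node, forces $b = \bfc_{\calS}(b_1, \ldots, b_r)$ with $r = \arity(\bfc)$, so $b \in \rnge(\bfc_{\calS})$; hence \bfc\ is the \emph{unique} constructor whose range contains $b$. In short, the constructor component of a node is completely determined by its element component.

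The core is an induction on the length of the address $w$, proving simultaneously that $w \in \dom T$ iff $w \in \dom T'$ and that $T(w) = T'(w)$ whenever both are defined. For the empty address both roots have element component $a$, so by label-determination their constructor components coincide, settling the base case. For the inductive step, suppose $w \in \dom T \cap \dom T'$ with common label $\lng b, \bfc\rng$ and $\arity(\bfc) = r$. Both nodes have exactly $r$ children, since clause~(2) ties the number of children to the arity; hence the addresses $w\,i$ with $1 \le i \le r$ are precisely the length-$(|w|\splus 1)$ extensions of $w$ lying in either domain, which settles domain agreement at the next level. Writing the children's element components as $b_1, \ldots, b_r$ in $T$ and $b'_1, \ldots, b'_r$ in $T'$, clause~(2) gives $\bfc_{\calS}(b_1, \ldots, b_r) = b = \bfc_{\calS}(b'_1, \ldots, b'_r)$, whence $b_i = b'_i$ for every $i$ by injectivity of $\bfc_{\calS}$; label-determination then forces the constructor components of the $i$-th children to agree as well, so $T(w\,i) = T'(w\,i)$. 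This completes the induction, and since $T$ and $T'$ agree in domain and in value at every address, $T = T'$.

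The main obstacle to keep in view is that decompositions may be \emph{infinite} trees, so there is no structural induction to appeal to; this is exactly why I would phrase everything as an induction on address depth over the partial-function presentation, which reaches every node in finitely many steps regardless of well-foundedness. A secondary point requiring care is the nullary case, where a node $\lng b, \bfc\rng$ with \bfc\ of arity $0$ is a leaf and clause~(2) reads $b = \bfc_{\calS}$; once the range of a nullary constructor is taken to be its singleton value, both the label-determination observation and the inductive step apply uniformly, so no separate case analysis is needed.
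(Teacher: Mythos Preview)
Your proof is correct and follows essentially the same approach as the paper: both argue by induction on the depth of an address that the two decompositions agree node by node, using disjointness of constructor ranges to pin down the constructor component and injectivity to pin down the children's element components. The paper's proof is a two-sentence sketch of exactly this induction; your version simply spells out the base case, the label-determination observation, and the handling of domains and nullary constructors that the paper leaves implicit.
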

\begin{proof}
Let $t,t'$ be decompositions of $a \in |\calS|$.
We prove by induction on $n$ that if $\lng b,\bfc\rng$ is at address \gra\
of $t$ of height $n$, then it is also at address \gra\ of $t'$.
The induction's basis and step follow outright from the assumption
that  \calS\ is separated.
\end{proof}

If $t$ is a \calC-decomposition of $a$, let
$\check{t}$ be the hyper-term obtained from $t$ by replacing 
each node $\lng b,\bfc\rng$ by \bfc.
We call $\check{t}$ a {\em constructor-decomposition} (for short,
a {\em decomposition}) of $a$.
From the proof of proposition \ref{prop:separated-decomposition}
it follows that an element $a$ of a separated structure
has at most one decomposition, which we denote (when it exists) by $\check{a}$.

\begin{prop}\label{prop:typed-decomposition}
Suppose \calS\ is a \calD-correct structure.
If  $a \in |\calS|$
has type \bfD\ in \calS,
then it has a decomposition, which has type \bfD\ in \calA.

Conversely, if $t \in \Rset$ has type \bfD\ in \calA,
then every $a \in |\calS|$  which has $t$ as decomposition,
is of type \bfD\ in \calS.
\end{prop}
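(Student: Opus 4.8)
The plan is to prove the two directions simultaneously by induction on the index $i$ of the type-bundle $\vec{\bfD}_i = \lng \bfD_{i1} \ldots \bfD_{im}\rng$ to which \bfD\ belongs (equivalently, on the nesting depth recorded by the rank of Definition~\ref{dfn:type-rank}), treating a whole bundle at once since its types are defined simultaneously. Throughout, the types $\vec{\bfD}_j$ with $j<i$ play the role of the ``given types,'' and the outer induction hypothesis supplies, for each such type, both the existence of the decomposition $\check a$ (unique when it exists, by Proposition~\ref{prop:separated-decomposition}, as correct structures are separated) and the claimed two-way correspondence between having that type in \calS\ and in \calA. The case split inside the inductive step is on whether $\vec{\bfD}_i$ is inductive or coinductive, and this dictates whether I reason by a well-founded induction or by coinduction.

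For an \emph{inductive} bundle, correctness of \calS\ means that the interpretations of $\bfD_{i1} \ldots \bfD_{im}$ form the \emph{least} tuple of subsets of $|\calS|$ closed under the construction rules~(\ref{eq:induction-rules}); membership is therefore witnessed by a well-founded generation, on which I would run an inner induction. For the forward direction, if $a$ has type $\bfD_{i\ell}$ in \calS\ then $a = \bfc_{{\cal S}}(a_1 \ldots a_r)$ with each $a_s$ of the type $\bfQ_s$ dictated by the applicable rule; each $a_s$ is decomposable (by the inner induction for same-bundle types, by the outer hypothesis for earlier-bundle types), so $\check a = \bfc(\check a_1 \ldots \check a_r)$ exists, and since \calA\ is likewise correct and hence closed under the same construction rule, $\check a$ has type $\bfD_{i\ell}$ in \calA. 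The converse is the mirror image: if $t$ is a decomposition of $a$ and $t$ has type $\bfD_{i\ell}$ in \calA, peel off the applicable rule in \calA\ to write $t = \bfc(\vec t)$, transport the conclusion back through $\bfc_{{\cal S}}$ (using that $t$ decomposes $a$ forces $a = \bfc_{{\cal S}}(\vec a)$ with $t_s$ decomposing $a_s$), and close under the construction rule in \calS, which is again available by correctness.

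The \emph{coinductive} bundle is where the real work lies, since correctness now characterizes the interpretation of $\vec{\bfD}_i$ as the \emph{largest} tuple closed under the deconstruction rules~(\ref{eq:coinduction}), so both halves become coinductive arguments. For existence of the decomposition I would build $\check a$ by corecursion: applying the deconstruction rule in \calS\ to a type-$\bfD_{i\ell}$ element $a$ exhibits a constructor-statement $\grq_j$, whence by separatedness a unique $a = \bfc_{{\cal S}}(a_1 \ldots a_r)$ with the $a_s$ in prescribed types; one continues corecursively on same-bundle children and splices in the decompositions granted by the outer hypothesis for earlier-bundle children, yielding a finitely-branching, possibly infinite tree. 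For type-correctness, set $T_\ell = \{\check a : a \text{ has type } \bfD_{i\ell} \text{ in } \calS\}$; the computation just described shows $\lng T_1 \ldots T_m\rng$ is closed under the deconstruction rules in \calA\ (the outer hypothesis placing earlier-bundle children in the correct \calA-types), so by maximality of the canonical interpretation each $\check a$ has type $\bfD_{i\ell}$ in \calA. The converse is dual: put $U_\ell = \{a\in|\calS| : \check a \text{ exists and has type } \bfD_{i\ell} \text{ in } \calA\}$, verify that $\lng U_1 \ldots U_m\rng$ is closed under the deconstruction rules in \calS\ (peeling the rule off in \calA, transporting it across $\bfc_{{\cal S}}$, and invoking the outer hypothesis on earlier-bundle children), and conclude $U_\ell$ is contained in the interpretation of $\bfD_{i\ell}$ in \calS\ by maximality in the correct structure \calS.

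The main obstacle, and the step I would treat most carefully, is this coinductive case: making the corecursive construction of $\check a$ precise (so that it is well-defined, finitely-branching, and agrees with the earlier-bundle decompositions at the splicing points), and then packaging the two type-correctness claims as genuine post-fixed-point statements, so that the maximality built into \calD-correctness of \calA\ and of \calS\ can be invoked. A secondary subtlety is the bookkeeping for simultaneously-defined bundles and for the alternation of inductive and coinductive layers; once the outer induction is organized by bundle index, however, this reduces to routinely tracking which children belong to the current bundle versus to an earlier one.
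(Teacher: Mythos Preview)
Your proposal is correct and follows essentially the same approach as the paper: induction on the bundle index (rank), with the coinductive case handled by exhibiting post-fixed points and invoking maximality in the two \calD-correct structures. The paper packages this uniformly by defining, for every type \bfD, the sets $A(\bfD)=\{a\in|\calS|:\check a\text{ exists and }\check a\in\bfD_{\cal A}\}$ and $S(\bfD)=\{t\in\Rset:t=\check a\text{ for some }a\in\bfD_{\cal S}\}$ (your $U_\ell$ and $T_\ell$, respectively) and, in the inductive case, uses minimality of $\bfD_{\cal S}$ and $\bfD_{\cal A}$ directly rather than your explicit inner well-founded induction---but these are equivalent, and your explicit corecursive construction of $\check a$ in the coinductive case is in fact a point the paper leaves implicit.
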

\begin{proof}
We prove the Proposition by cumulative induction on the rank of \bfD\
in \calD.  Suppose the statement holds for types of rank $< n$.
For each type \bfD\ of \calD\ define
$$
A(\bfD) = 
	\{ a \in |\calS| \mid \hbox{$a$ has a decomposition, which
	is in $\bfD_{{\cal A}}$} \; \}
$$
and 
$$
S(\bfD) = 
	\{ t \in \Rset \mid \hbox{$t$ is the decomposition
	of some  $a \in \bfD_{{\cal S}}$} \; \}
$$

Suppose \bfD\ is in an inductive bundle $\vec{\bfD}_i$ of \calD. 
The sequence of sets $\lng A(\bfD_{ij}) \rng_j$ satisfies 
the inductive closure condition of $\vec{\bfD}_i$.
To see this, consider a rule of \calD\ for $\vec{\bfD}_i$, say
(w.l.o.g.)
$$
\bfD(y_1) \wedge \bfD'(y_2) \wedge \bfE(y_3) \ra \bfD(\bfc(y_1,y_2,y_3))
$$
where $\bfD'$ is another type in $\vec{\bfD}_i$ and \bfE\ is
a type of rank $< n$.
We show that
$$
(y_1 \in A(\bfD))\; \wedge \;
(y_2 \in A(\bfD')) \;\wedge\; 
(y_3 \in \bfE_{{\cal S}}) \; \ra \; \bfc(y_1,y_2,y_3) \in A(\bfD)
$$
The first two premises mean that $y_1$ and $y_2$ 
have decompositions
$\check{y}_1 \in \bfD_{{\cal A}}$
and 
$\check{y}_2 \in \bfD'_{{\cal A}}$,
and the third premise
implies that
$\check{y}_3 \in \bfE_{{\cal S}}$ by IH, 
since \bfE\ is of rank $< n$.
So the hyper-term $\bfc(\check{y}_1,\check{y}_2,\check{y}_3)$
is in $\bfD_{{\cal A}}$, since \calA\ is \calD-correct.
That hyper-term is the decomposition of
$\bfc(y_1,y_2,y_3)$, proving that the latter is in $A(\bfD)$.

Since $\lng (\bfD_{ij})_{{\cal S}}\rng_j$ 
is the smallest fixpoint of those conditions (given that \calS\ is
\calD-correct), it follows that 
$\bfD_{{\cal S}} \subseteq A(\bfD)$, i.e.\
every element of $|\calS|$ of type \bfD\ in \calS\ 
has a decomposition,
which furthermore is of type \bfD\ in \calA.

For the converse, we observe that the sequence of sets 
$\lng S(\bfD_{i,j})\rng_j$  is closed under the
inductive closure conditions of the bundle $\vec{\bfD}_i$.
To see this, consider again a rule
$$
\bfD(y_1) \wedge \bfD'(y_2) \wedge \bfE(y_3) \ra \bfD(\bfc(y_1,y_2,y_3)
$$
as above.
Assume the premise of
$$
(y_1 \in S(\bfD))\; \wedge \;
(y_2 \in S(\bfD')) \;\wedge\; 
(y_3 \in \bfE_{{\cal A}}) \; \ra \; \bfc(y_1,y_2,y_3) \in S(\bfD)
$$
The first two conjuncts mean that $y_1$ and $y_2$ 
are decompositions of some
$a_1 \in \bfD_{{\cal S}}$
and 
$a_2 \in \bfD'_{{\cal S}}$,
and the third implies, by IH, 
that $y_3$ is the decomposition of some
$a_3 \in \bfE_{{\cal S}}$.
The hyper-term $\bfc(y_1,y_2,y_3)$ 
is the decomposition of
$\bfc_{{\cal S}}(a_1,a_2,a_3)$, which is in 
$\bfD_{{\cal S}}$, since \calS\ is \calD-correct.
This concludes the case where \bfD\
is an inductive type.

Suppose now that \bfD\ is coinductive.
Then $\lng A(\bfD_{ij})\rng_j$ satisfies 
the coinductive closure condition of the bundle $\bfD_i$.
To see this, consider the rule of \calD\ for \bfD, say
$$
\bfD(x) \ra \grq_1 \vee \cdot\cdot\cdot \vee \grq_k
$$ 
where each $\grq_i$ is a constructor-statement.
Assume (w.l.o.g.) that $k=1$ and
\begin{eqnarray}\label{eq:coinductiveD}
\grq_1 \quad & \equiv & \qquad 
 	\exists y_1,y_2,y_3 \; x = \bfc(y_1,y_2,y_3)
 		\; \wedge \bfD(y_1) \wedge \bfD'(y_2) \wedge \bfE(y_3)
\end{eqnarray}
where $\bfD'$ and \bfE\ are as above.
We show that
$$
a \in A(\bfD) \;\; \ra \;\;
	(\; \exists y_1 \in A(\bfD) \; \exists y_2 \in A(\bfD')
		 \; \exists y_3 \in \bfE_{{\cal S}}  
		\quad  a = \bfc_{{\cal S}}(y_1,y_2,y_3)\; )
$$
An element $a \in A(\bfD)$ has $\check{a} \in \bfD_{{\cal A}}$.
Since \calA\ is \calD-correct, $\check{a}$
must be $\bfc(t_1,t_2,t_3)$ for some
$t_1 \in \bfD_{{\cal A}}$,
$t_2 \in \bfD'_{{\cal A}}$,
and $t_3 \in \bfE_{{\cal A}}$.
Since $\check{a}$ is the decomposition of $a$, this means
that $a = \bfc_{{\cal S}}(b_1,b_2,b_3)$, where $t_i = \check{b}_i$.
So $b_1 \in A(\bfD)$, $b_2 \in A(\bfD')$, by the definition of
the function $A$,
and  $b_3 \in \bfE_{{\cal S}}$ by IH, 
since \bfE\ is of rank $< n$.

Since \calS\ is \calD-correct, $\bfD_{{\cal S}}$ is the greatest
set closed under the closure conditions for the bundle $\vec{\bfD}$;
it therefore has $A(\bfD)$ as a subset.  That is, every element $a$ of
\calS\ whose decomposition is of type \bfD\ in \calA, is of type
\bfD\ in \calS.

For the converse, we similarly prove that
$\lng S(\bfD_{i,j})\rng_j$  is closed under the
coinductive closure conditions of the bundle $\vec{\bfD}_i$.
Suppose again that the coinductive rule for \bfD\ is 
(\ref{eq:coinductiveD}) above.
We show that for every hyper-term $t$
$$
t \in S(\bfD) \;\; \ra \;\;
	( \exists y_1 \in S(\bfD) \; \exists y_2 \in S(\bfD')
		 \; \exists y_3 \in \bfE_{{\cal S}}  
		\;  a = \bfc_{{\cal S}}(y_1,y_2,y_3))
$$
Suppose $t \in S(\bfD)$, i.e.\ $t$ is the 
deconstruction of some $a \in \bfD_{{\calS}}$.
Since \calS\ is \calD-correct, $a$ must be
$\bfc(b_1,b_2,b_3)$ for some
$b_1 \in \bfD_{{\cal S}}$,
$b_2 \in \bfD'_{{\cal S}}$,
and $b_3 \in \bfE_{{\cal S}}$.
So $t$ must be of the form $\bfc(t_1,t_2,t_3)$
where $t_1 \in S(\bfD)$, $t_2 \in S(\bfD')$, by definition of
$S(\cdots)$, and $t_3 \in \bfE_{{\cal S}}$ (by IH).

Since $\bfD_{{\cal A}}$ is the greatest subset of
\Rset\ closed under the rule for \bfD, is follows that it has
$S(\bfD)$ as a subset.  That is, if a hyper-term $t$ is the decomposition
of an element of $\bfD_{{\cal S}}$, then $t$ is of type 
\bfD\ in \calA.
\end{proof}

\begin{cor}
For any two \calD-correct structures \calS\ and \calQ, if
$a \in |\calS|$ and $b \in |\calQ|$ have the same
decomposition, then they have the same types
in \calS\ and \calQ.
\end{cor}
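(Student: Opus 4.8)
The plan is to use the canonical model $\calA$ as a common reference point, routing type-membership in the two arbitrary structures through it by means of Proposition~\ref{prop:typed-decomposition}. Write $t$ for the shared decomposition, so that $t = \check{a} = \check{b}$; this is well-defined because \calD-correct structures are separated for \calC, whence Proposition~\ref{prop:separated-decomposition} guarantees that each of $a$ and $b$ has at most one decomposition.

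It then suffices to fix an arbitrary type-identifier \bfD\ of \calD\ and establish the biconditional $a \in \bfD_{{\cal S}} \iff b \in \bfD_{{\cal Q}}$. For the forward implication, assume $a \in \bfD_{{\cal S}}$. The first half of Proposition~\ref{prop:typed-decomposition}, applied to the \calD-correct structure \calS, yields that the decomposition of $a$ --- namely $t$, by uniqueness --- lies in $\bfD_{{\cal A}}$. Now apply the converse half of the same proposition to the \calD-correct structure \calQ: since $b$ has $t$ as its decomposition and $t \in \bfD_{{\cal A}}$, we conclude $b \in \bfD_{{\cal Q}}$. The reverse implication follows by the symmetric argument, interchanging the roles of $(\calS,a)$ and $(\calQ,b)$. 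As \bfD\ was arbitrary, $a$ and $b$ carry exactly the same types in their respective structures.

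The argument is essentially a bookkeeping exercise once Proposition~\ref{prop:typed-decomposition} is in hand, and I do not anticipate a genuine obstacle. The one point that deserves care is the appeal to uniqueness of decompositions: the forward half of Proposition~\ref{prop:typed-decomposition} only asserts that a typed element \emph{has} some decomposition lying in the canonical interpretation, so to identify that decomposition with the given common value $t$ one must invoke Proposition~\ref{prop:separated-decomposition}, which in turn relies on the separatedness clause built into \calD-correctness. With that link secured, the canonical model $\calA$ mediates between \calS\ and \calQ\ uniformly across all types, and the corollary drops out.
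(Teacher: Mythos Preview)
Your argument is correct and is exactly the intended one: the paper states the result as an unproved corollary of Proposition~\ref{prop:typed-decomposition}, and your write-up simply spells out the two-step passage through $\calA$ together with the uniqueness-of-decomposition point needed to identify the decomposition produced by the proposition with the given common $t$.
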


\subsection{Typing statements}

\begin{defi}
Given a data-system \calD\ over \calC, with
$\bfD_1, \dots, \bfD_r$ and \bfE\ among its type-identifiers,
we say that a partial function
$g: \Rset^r  \pra \Rset$ is {\em of type $(\times_{j\in J} \bfD_j) \ra \bfE$} 
if 
$a_j \in (\bfD_j)_{{\cal A}}$ ($j\in J$)
jointly imply that $g(\vec{a})$ is definable
and in $\bfE_{{\cal A}}$.

If $(P,\bff)$ is a program that computes the partial-function $g$ above, we
also say that $P$ {\em is of type} $(\times_i \bfD_i) \ra \bfE$.
\end{defi}

Note that each function, including the constructors, 
can have multiple types.
Also, a program may compute a non-total mapping over \Rset,
and still be of type $\bfD \ra \bfE$, i.e.\ compute a total function from type
\bfD\ to type \bfE.

When a (total) function $\;f: \; \Rset \ra \Rset\;$ fails to be 
of a type $\bfD \sra \bfE$ there must be some $d \in \lsem\bfD\rsem$ for which
$f(d) \not\in \lsem\bfE\rsem$.  Thus the value $f(d)$ can represent
divergence with respect to computation over $\lsem\bfD\rsem$, as for example
when $\lsem\bfD\rsem = \dN$ and $\lsem\bfE\rsem = \dN_\bot$ 
with $f(d) = \bot$.
However, to adequately capture the
computational behavior of equational programs, 
multiple representations of divergence
might be necessary; see \cite{Leivant-intrinsic} for examples and discussion. 
 
The partiality of computable functions is commonly addressed
either by allowing partial structures 
\cite{Kleene69,AstesianoBKKMST02,Mosses04},
or by considering semantic domains, with an object $\bot$ denoting divergence.
The approach here is based instead on the ``global" behavior of programs in 
all structures.

\subsection{Canonicity for inductive data}\label{subsec:inductive-canonicity}

Definition \ref{dfn:comput} provides the computational semantics of a 
program $(P,\bff)$.  But as a set of equations a
program can be construed simply as a first-order formula, namely
the conjunction of the universal closure of those equations.
As such, a program has its Tarskian semantics, referring to arbitrary structures
for the vocabulary in hand, that is the constructors and
the program-functions used in $P$. 
A model of $P$ is then just a structure that
satisfies each equation in $P$.

Herbrand proposed to define a (total) function $g$ as {\em computable}
just in case there is a program for which $g$ is the unique solution.\fn{This 
proposal was
made to G\"{o}del in personal communication, and reported in \cite{Godel34}.
A modified proposal, incorporating an operational-semantics ingredient, 
was made in \cite{Herbrand32}.}
It is rather easy to show that every computable function is indeed the
unique solution of a program.
But the converse fails.  In fact, Herbrand's definition yields precisely
the hyper-arithmetical functions \cite{Rogers67}.\fn{The first 
counter-example 
to Herbrand's proposal is probably
due to Kalmar \cite{Kalmar55}. A simple example of a program
whose unique solution is not computable was given by
Kreisel, quoted in \cite{Rogers67}.}
But Herbrand's ingenious idea to relate computability of a program
to the unicity of its solution is still in force, provided one
refers collectively to all \calD-correct structures:\fn{Of course, 
the important correction of Herbrand's equational computing
is G\"{o}del's radical change of perspective, from Tarskian semantics to
operational (rewrite rules).}

\begin{thm}\label{thm:N-canon}{\rm (Canonicity Theorem for \dN) 
\cite{Leivant-intrinsic}}
An equational program $(P,\bff)$ over \dN\ computes a total function
iff the formula $\ttN(x) \ra \ttN(\bff(x))$ is true in every 
\dN-correct model of $P$.
\end{thm}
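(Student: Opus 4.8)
The plan is to prove the two implications separately, using the decomposition machinery of Proposition~\ref{prop:typed-decomposition} for the forward direction and a term-model construction for the converse. First I would record the relevant reductions. Since the data-system here is the single inductive type \dN, the canonical model \calA\ interprets \ttN\ as the numerals $\{\widehat{\tts^k\ttzero} \mid k\in\dN\}$, while every \dN-correct structure \calS\ interprets \ttN\ as the least set containing $\ttzero_\calS$ and closed under $\tts_\calS$, namely $\{\sem{\tts^k\ttzero}_\calS \mid k\in\dN\}$, whose elements are pairwise distinct by separation. As numerals are finite data-terms, Proposition~\ref{prop:finite-comput} lets me replace local computation by ordinary equational derivability: $(P,\bff)$ computes a total function over \dN\ iff for every $k$ there is a numeral \bfq\ with $P \vdash \bff(\tts^k\ttzero)=\bfq$.

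For the forward direction, assume totality and fix a \dN-correct model $\calS\models P$ together with $a\in\ttN_\calS$. By Proposition~\ref{prop:typed-decomposition}, $a$ has a decomposition of type \ttN\ in \calA, hence that decomposition is a numeral $\widehat{\tts^k\ttzero}$; unwinding its decomposition tree in the separated structure \calS\ forces $a=\sem{\tts^k\ttzero}_\calS$. Totality supplies a numeral \bfq\ with $P\vdash\bff(\tts^k\ttzero)=\bfq$, so by soundness of equational logic and $\calS\models P$ we obtain $\bff_\calS(a)=\sem{\bfq}_\calS$, which lies in $\ttN_\calS$ because the interpretation of any numeral belongs to the least fixpoint. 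Thus $\ttN(x)\ra\ttN(\bff(x))$ holds at $a$, and, as $a$ was arbitrary, in \calS.

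For the converse I would argue contrapositively: suppose some $n$ admits no numeral \bfq\ with $P\vdash\bff(\tts^n\ttzero)=\bfq$, and build a \dN-correct model refuting the formula. Take as universe the closed $P$-terms modulo provable equality $\sim$, interpret each symbol syntactically on representatives, and set $\ttN_\calS$ to be the numeral classes $\{[\tts^k\ttzero]\mid k\in\dN\}$ (automatically the least fixpoint). The key lemma is that orienting the program-equations together with the destructor/discriminator equations~(\ref{eq:destr}) left-to-right yields an \emph{orthogonal} rewrite system: left-linearity is precisely the no-repeated-variable condition, and non-overlap is precisely pairwise-compatibility of the left-hand sides together with the fact that distinct constructors do not unify. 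Orthogonal systems are confluent, and since a constructor head is never a redex, confluence gives that distinct constructor normal forms are never joinable, which establishes separation of the term model. Hence \calS\ is \dN-correct and models $P$. Finally $\bff_\calS([\tts^n\ttzero])=[\bff(\tts^n\ttzero)]$ is, by hypothesis, the class of no numeral, so it lies outside $\ttN_\calS$, while $[\tts^n\ttzero]\in\ttN_\calS$; thus $\ttN(x)\ra\ttN(\bff(x))$ fails.

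The main obstacle is exactly this countermodel construction, i.e.\ verifying that the quotient term model is \dN-correct; everything rests on the confluence lemma, and that is the only place where the program restrictions (left-linear patterns, non-unifiable left sides) are genuinely used. This also explains the remark that a single $\bot$ does not suffice: the refuting model must keep all distinct non-numeral term-classes apart, since collapsing them into one divergence element would in general contradict the program equations (for instance a program forcing $\bff(\tts^n\ttzero)=\tts(\bff(\tts^n\ttzero))$) and would destroy separation. The substantive content of the theorem is therefore that the free term model, rather than any flat-domain quotient, is the correct witness.
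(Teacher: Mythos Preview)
Your proof is correct. The paper does not prove Theorem~\ref{thm:N-canon} itself---it is cited from \cite{Leivant-intrinsic}---but your argument is the natural specialization to \dN\ of the paper's proof of the general Canonicity Theorem~\ref{thm:canon}: the forward direction is exactly the (1)$\Rightarrow$(2) step there via Proposition~\ref{prop:typed-decomposition}, and your closed-$P$-term Herbrand model for the converse plays the same role as the paper's quotient $\calM(P,\grh)$, with ordinary provable equality replacing the local-inference relation $\approx$ once Proposition~\ref{prop:finite-comput} has reduced local computation to equational derivability. The one place your write-up adds substance is the explicit orthogonality/confluence argument for separation of the term model; the paper's proof of Theorem~\ref{thm:canon} asserts separation in a single sentence (``this symbolic interpretation of the constructors guarantees that the structure is separated for~$\calC$''), and your argument is precisely what justifies that line---and is indeed the only place where the left-linearity and pairwise-compatibility hypotheses on program equations actually do work.
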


\subsection{Canonicity for Data Systems}

We generalize Theorem \ref{thm:N-canon} to all data-systems.
Given a (unary) program $(P,\bff)$ over a data-system \calD,
and a valuation \grh,
we construct a canonical model $\calM(P,\grh)$ 
to serve as ``test-structure" for the program $P$ 
and the valuation \grh\ as input.  

We define the equivalence relation 
$\approx_{P,\eta}$ over hyper-terms
to hold between $t$ and $q$
iff $\Dh,P$ locally infer $t=q$, in the sense of Definition
\ref{dfn:comput}.
When safe, we write $\approx$ for $\approx_{P,\eta}$.

Let $\calQ(P,\grh)$ be the structure
whose universe is the quotient $\Rset/\!\!\approx$, and where
each function-identifier (constructor or program-function)
is interpreted as symbolic application: 
for an $r$-ary identifier \bff, $\bff_{{\cal Q}}$ maps equivalence 
classes $[\bft_i]_\approx$ to
$[\bff (\vec{\bft})]_\approx$.
This symbolic interpretation of the constructors
guarantees that the structure is separated for $\calC$.
Let now $\calM(P,\grh)$ be the \calD-correct expansion of $\calQ(P,\grh)$,
i.e.\ the expansion of $\calQ(P,\grh)$ to the full vocabulary of \calD,
with type-identifiers, where 
inductive types are interpreted
as the minimal subsets of \Rset\ closed under their closure conditions, and
the coinductive types
as the maximal subsets closed under their closure conditions.

\begin{lem}\label{lem:M-models-P}
$\calM(P,\eta)$ is a model of $P$.
\end{lem}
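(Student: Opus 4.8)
The plan is to reduce the lemma to the standard ``quotient by a congruence containing the axioms'' argument, adapted to the local‑inference relation $\approx_{P,\eta}$. First I would observe that no type‑identifier of \calD\ occurs in any program‑equation, so the type‑part of the \calD‑correct expansion is irrelevant to satisfaction of $P$; it therefore suffices to show that the quotient structure $\calQ(P,\eta)$ models $P$. Reading each equation of $P$ as its universal closure, I must check that for every program‑equation $\bff(\bft_1,\ldots,\bft_k)=\bfq$ of $P$ and every assignment of elements of $|\calQ(P,\eta)|$ to the variables, the two sides receive equal values. Choosing hyper‑term representatives $r_x$ for the classes assigned to the variables $x$, I would record the key \emph{commutation} fact, proved by a routine induction on term structure: since every function‑identifier (constructor or program‑function) is interpreted in $\calQ(P,\eta)$ by symbolic application, the value of any program‑term \bfs\ under the assignment $x\mapsto[r_x]_\approx$ is exactly $[\,\bfs[\vec r/\vec x]\,]_\approx$.

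Applying this to the two sides, the left‑hand side evaluates to $[\,\bff(\bft_1,\ldots,\bft_k)[\vec r/\vec x]\,]_\approx$ and the right‑hand side to $[\,\bfq[\vec r/\vec x]\,]_\approx$. Hence the equation holds under this assignment iff these classes coincide, i.e.\ iff
$$
\bff(\bft_1,\ldots,\bft_k)[\vec r/\vec x] \;\approx_{P,\eta}\; \bfq[\vec r/\vec x],
$$
which by definition of $\approx_{P,\eta}$ means $\Dh,P \vdash^\omega \bff(\vec\bft)[\vec r/\vec x] = \bfq[\vec r/\vec x]$.

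To establish this local inference I would argue in two steps. First, $\bff(\vec\bft)=\bfq$ is itself an axiom of $P$, so the substitution instance is derivable in ordinary equational logic, $P \vdash \bff(\vec\bft)[\vec r/\vec x] = \bfq[\vec r/\vec x]$, by instantiating the (implicitly universally quantified) variables $\vec x$ with the representatives $\vec r$. Second, ordinary derivability entails local inference: whenever $P\vdash u=w$, then for every deep‑destructor \grP\ the replacement (congruence) rule applied to the one‑hole context $\grP^o(\cdot)$ gives $P \vdash \grP^o(u)=\grP^o(w)$, so $P \vdash^\omega u=w$, and a fortiori $\Dh,P \vdash^\omega u=w$. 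Combining the two steps yields exactly the required $\approx_{P,\eta}$‑equivalence, completing the verification that $\calM(P,\eta)\models P$.

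The delicate point, and the step I expect to require the most care, is that the representatives $r_x$ are arbitrary hyper‑terms and hence possibly \emph{infinite}, so ``instantiating the axiom at $\vec r$'' must be justified for infinite arguments. Here I would invoke the fact that an equational derivation is a finite object, and that each of its rules (reflexivity, symmetry, transitivity, replacement, and axiom instantiation) is closed under substitution of arbitrary hyper‑terms for variables: applying $[\vec r/\vec x]$ to a fixed finite derivation of $\grP^o(\bff(\vec\bft))=\grP^o(\bfq)$ (with $\vec x$ free) produces a legitimate derivation of $\grP^o(\bff(\vec\bft)[\vec r/\vec x])=\grP^o(\bfq[\vec r/\vec x])$, the probe $\grP^o$ touching only finitely much of each $r_x$. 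This substitution‑closure of derivations is what makes the argument go through uniformly for infinite inputs; I take for granted here only the well‑definedness of $\calQ(P,\eta)$ as a structure (equivalently, that $\approx_{P,\eta}$ is a congruence), which is presupposed by its definition via symbolic application just above.
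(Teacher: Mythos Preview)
Your approach is essentially the paper's: the paper proves (i) $[\bff(\vec{\bft})]_\approx=[\bfq]_\approx$ ``immediately from the definition of $\approx_{P,\eta}$'' and (ii) $\lsem\bft\rsem_{\calM(P,\eta)}=[\bft]_\approx$ by structural induction on terms, then concludes. Your commutation fact is exactly (ii), and your two-step argument (axiom instance in equational logic, then ordinary derivability $\Rightarrow$ local inference via congruence with $\grP^o(\cdot)$) is the unpacking of (i). The only substantive difference is that you make explicit the passage through the \emph{universal closure}: you quantify over assignments, pick representatives $\vec r$, and verify the substitution instance, whereas the paper's three-line argument leaves the treatment of the free variables in $\bff(\vec\bft)=\bfq$ implicit. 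Your closing paragraph about infinite representatives and substitution-closure of finite derivations is a welcome clarification of a point the paper does not raise.
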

\proof
If $\bff(\vec{\bft}) = \bfq$ is an equation in $P$,
then $\bff(\vec{\bft}) \approx \bfq$ is immediate
from the definition of $\approx_{P,\eta}$.
Thus 
$$
[\bff(\vec{\bft})]_\approx = [\bfq]_\approx
$$

Also, by structural induction on terms, one easily proves that
$$
\lsem \bft \rsem_{{\cal M}(P,\eta)} = [\bft]_\approx
$$
for each term \bft, since
function-identifiers are interpreted in $\calM(P,\eta)$
symbolically.

We conclude
\[\lsem \bff(\vec{\bft})\rsem_{{\cal M}(P,\eta)} 
	= \lsem \bfq\rsem_{{\cal M}(P,\eta)}\eqno{\qEd}
\]

\begin{thm}\label{thm:canon}{\rm (Canonicity Theorem for Data Systems)}
Let \calD\ be a data-system over \calC,
and \bfD, \bfE\ two type-identifiers of \calD.
Let $(P,\bff)$ be an equational program over \calC\ computing a 
partial-function $g: \; \Rset\pa\Rset$.

The following are equivalent:
\be
\item
$g: \; \bfD_{{\cal A}} \ra  \bfE_{{\cal A}}$
\item
\zero $\bfD(x) \ra \bfE(\bff(x))$ is true
in every \calD-correct model of $P$.
\ee
The equivalence above generalizes to arities $\neq  1$.
\end{thm}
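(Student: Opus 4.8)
The plan is to route everything through a single bridging lemma asserting that local (operational) computation is faithfully reflected in \emph{every} $\calD$-correct model of $P$, and then to read off both implications from it together with Proposition~\ref{prop:typed-decomposition} and the canonical test-model of Lemma~\ref{lem:M-models-P}.

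\emph{The bridging lemma.} For any $\calD$-correct model $\calS$ of $P$ and any $a\in|\calS|$ with a decomposition $t=\check a$, I claim that $\bff_\calS(a)$ has a decomposition if and only if $g(t)$ is defined, and then its decomposition is $g(t)$. I would prove this address by address. For a deep-destructor $\grP$, the value $\grP^o_\calS(\bff_\calS(a))$ records the main constructor of $\bff_\calS(a)$ at address $\grP$; and because $a$ decomposes to $t$, the input-diagram equations $\grP^o(\ttv)=\bfc^o$ (under $\ttv\mapsto a$) all hold in $\calS$, using only the destructor/discriminator equations~(\ref{eq:destr}) that $P$ contains. If $g(t)=q$, then by Definition~\ref{dfn:comput} we have $P,\Dh\vdash^\omega\bff(\ttv)=\ttu$ for $\grh=[\ttv,\ttu\leftarrow t,q]$, and since the output part of $\Dh$ supplies $\grP^o(\ttu)=\bfc_\grP^o$ (where $\bfc_\grP$ is the constructor of $q$ at $\grP$), one extracts from the \emph{input} part $\Dh^{\rm in}$ alone a derivation $P,\Dh^{\rm in}\vdash\grP^o(\bff(\ttv))=\bfc_\grP^o$. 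Soundness of equational logic then forces $\grP^o_\calS(\bff_\calS(a))=\bfc_\grP^o$ at every $\grP$, so $\bff_\calS(a)$ decomposes to $q$. The converse (completeness) direction instead reads the constructors of $\bff_\calS(a)$ off its decomposition and reassembles them, via Definition~\ref{dfn:comput}, into the single local inference $P,\Dh\vdash^\omega\bff(\ttv)=\ttu$.

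\emph{From the lemma to the theorem.} For (1)$\Rightarrow$(2), fix a $\calD$-correct model $\calS\models P$ and $a\in\bfD_\calS$; by Proposition~\ref{prop:typed-decomposition} $a$ has a decomposition $t:=\check a\in\bfD_\calA$, so by (1) $g(t)$ is defined and lies in $\bfE_\calA$, whence the bridging lemma gives that $\bff_\calS(a)$ decomposes to $g(t)\in\bfE_\calA$, and the converse half of Proposition~\ref{prop:typed-decomposition} yields $\bff_\calS(a)\in\bfE_\calS$. For (2)$\Rightarrow$(1), fix $t\in\bfD_\calA$ and take the test-model $\calM:=\calM(P,[\ttv\leftarrow t])$, which is $\calD$-correct by construction and a model of $P$ by Lemma~\ref{lem:M-models-P}. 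Since constructors are interpreted symbolically in $\calM$, the class $[t]_\approx$ has decomposition $t$, so $[t]_\approx\in\bfD_\calM$ by Proposition~\ref{prop:typed-decomposition}; hence (2) gives $\bff_\calM([t]_\approx)\in\bfE_\calM$, Proposition~\ref{prop:typed-decomposition} supplies a decomposition $q\in\bfE_\calA$ of $\bff_\calM([t]_\approx)$, and the completeness direction of the bridging lemma (applied in $\calM$) shows $g(t)$ is defined and equals $q\in\bfE_\calA$. The extension to arbitrary arity is routine: replace $\ttv,a$ by tuples $\vec\ttv,\vec a$ throughout, the diagrams and local inference being already defined for tuples.

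\emph{The main obstacle.} The delicate step is the soundness half of the bridging lemma: from the convergence $g(t)=q$ one must conclude that $\bff_\calS(a)$ is genuinely constructor-headed at every address and matches $q$, even though $\calS$ may contain \emph{no} element decomposing to $q$. One therefore cannot simply interpret the output variable $\ttu$ in $\calS$ and invoke soundness on the full diagram $\Dh$; instead one must eliminate $\ttu$ and produce, for each address $\grP$, a derivation of $\grP^o(\bff(\ttv))=\bfc_\grP^o$ from $P$ and $\Dh^{\rm in}$ only. This is where the no-repeated-variable discipline on program-equations and a careful accounting of which finitely many diagram equations each finite derivation consumes must be used; everything else reduces to Proposition~\ref{prop:typed-decomposition}, the construction of $\calM(P,\grh)$, and routine soundness of equational logic, with Proposition~\ref{prop:finite-comput} serving as the finite-data sanity check.
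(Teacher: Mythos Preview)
Your overall strategy matches the paper's: soundness of equational logic for (1)$\Rightarrow$(2), and the term model $\calM(P,\grh)$ for (2)$\Rightarrow$(1). The soundness half of your bridging lemma, together with the elimination of the output variable $\ttu$ that you flag as the ``main obstacle,'' is correct and is exactly what the paper does in its (1)$\Rightarrow$(2) step.

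The gap is the \emph{completeness} half of your bridging lemma, which is false as stated. You assert that in \emph{every} $\calD$-correct model $\calS$ of $P$, if $\bff_\calS(a)$ has a decomposition then $g(\check a)$ is defined and equals it. Counterexample: let $P$ contain only the destructor/discriminator equations~(\ref{eq:destr}) together with the vacuous $\bff(x)=\bff(x)$, so that $g$ is nowhere defined; take $\calS$ to be the canonical structure $\calA$ with $\bff_\calS$ interpreted as the identity. This $\calS$ is a $\calD$-correct model of $P$, and $\bff_\calS(a)=a$ has decomposition $\check a$ for every decomposable $a$, yet $g(\check a)$ is undefined. Your sketched argument (``read the constructors of $\bff_\calS(a)$ off its decomposition and reassemble them into a local inference'') fails here because equalities holding semantically in $\calS$ need not be derivable in equational logic from $P$.

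The repair is exactly what the paper does: do not claim completeness for arbitrary $\calS$, but argue directly in $\calM(P,\grh)$. There the equivalence $\approx$ is \emph{defined} as local derivability from $P,\Dh$, so the fact that $[\bff(\ttv)]_\approx$ has a decomposition $q$ unwinds, via the symbolic interpretation of constructors in $\calM$, to the derivability of each $\grP^o(\bff(\ttv))=\bfc_\grP^o$, which is precisely what is needed to conclude $g(t)=q$. The paper makes this explicit by inserting an intermediate condition~(3), that $\bfD(x)\ra\bfE(\bff(x))$ holds in every $\calM(P,\grh)$, and closing the cycle (1)$\Rightarrow$(2)$\Rightarrow$(3)$\Rightarrow$(1). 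Your (2)$\Rightarrow$(1) step via $\calM$ is essentially the paper's (3)$\Rightarrow$(1), but it must appeal to the construction of $\calM$ directly rather than to a general lemma that does not hold.
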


\begin{proof}
We show that (1) and (2) are also equivalent to
\be
\setcounter{enumi}{2}
\item\zero
For all valuations \grh,
$\;\; \calM(P,\grh) \models \bfD(x) \; \ra \; \bfE(\bff (x))$.
\ee

\medskip
\noindent
\ul{(1) implies (2):}
Assume (1), and let \calS\ be a \calD-correct model of $P$.
Consider an element $a \in \bfD_{{\cal S}}$.
By Proposition \ref{prop:typed-decomposition}
$a$ has a decomposition $\check{a}$.
Moreover, since \calS\ is \calD-correct, the closure conditions justifying 
$a \in \bfD_{{\cal S}}$ also justify
$\check{a} \in \bfD_{{\cal A}}$.
By (1), this implies that
$g(\check{a})\in \bfE_{{\cal A}}$.

Since $g$ is computed by $P$ we have, for each deep-destructor \grP,
that an equation
$\grP^o(\bff(\ttv)) = \bfc^o$
is derivable in equational logic from $P$ and \Dh,
where $\bfc$ is the main constructor of $\grP(g(\check{a}))$.
Since $\check{a}$ is the decomposition of $a$, all equations
\Dh\ are true in \calS.
But \calS\ is known to be a model of $P$,
so $\grP^o(\bff(\ttv)) = \bfc^o$
is true in \calS\ with \ttv\ bound to $a$.  
This being the case for every deep-destructor \grP,
it follows that $\bff_{{\cal S}}(a)$ has the same\
decomposition as $g(\check{a})$.  But $g(\check{a}) \in \bfE_{{\cal A}}$
and \calS\ is \calD-correct, 
so $\bff_{{\cal S}}(a) \in \bfE_{{\cal S}}$, proving (2).

\medskip

\noindent
\ul{(2) implies (3):} $\calM(P,\grh)$ is \calD-correct by definition.
It is a model of $P$ by Lemma \ref{lem:M-models-P}.
So (3) is a special case of (2).

\medskip

\noindent
\ul{(3) implies (1):}
Assume (3).
Consider input $a \in \bfD_{{\cal A}}$,
and let $\grh(\ttv)=a$.
The class $[\ttv]_\approx$ has then the same decomposition
as $a$, and since $\calM(P,\grh)$ is \calD-correct,
it must have type \bfD\ in $\calM(P,\grh)$,
by Proposition \ref{prop:typed-decomposition}.
By (3) it follows that 
$$
\bff_{{\cal M}(P,\eta)}([\ttv]_{\approx}) \in \
	\bfE_{{\cal M}(P,\eta)}
$$
But
$$
\bff_{{\cal M}(P,\eta)}([\ttv]_{\approx}) =
	[\bff(\ttv)]_{\approx}
$$
by definition of $\calM(P,\eta)$.
Since $g$ is computed by $(P,\bff)$,
we have $\grP^o(\bff(\ttv)) = \grP^o(g(a))$ for all
deep-destructors \grP.
So $g(a)$ has the same decomposition as $[\bff(\ttv)]_{\approx}$,
and therefore is in $\bfE_{{\cal A}}$.
\end{proof}

\section{Intrinsic theories}

\subsection{Intrinsic theories for inductive data}

{\em Intrinsic theories} for inductive data-types
were introduced in \cite{Leivant-intrinsic}.
They support
unobstructed reference to partial functions and to non-denoting terms,
common in functional and equational programming.
Each intrinsic theory is intended to be a framework for
reasoning about the typing properties of programs, including their
termination and productivity. 
In particular, declarative programs are considered as formal theories.
This departs from two longstanding approaches to reasoning about
programs and their termination, namely
programs as modal operators \cite{Segerberg77,Pratt76,HarelKT},
and programs (and their computation traces) as
explicit mathematical objects \cite{Kleene52,Kleene69}.

Let \calD\ be a data-system consisting of a single inductive 
bundle $\vec{\bfD}$.
The {\em intrinsic theory for \calD}
is a first order theory over the vocabulary of \calD,
whose axioms are 
\bi
\item The closure rules of \calD.
\item
\zero{\bf Separation axioms for \calC,} 
stating that the constructors are injective
and have pairwise-disjoint codomains.
These imply that all data-terms are distinct.
\item\zero
{\bf Inductive delineation (data-elimination, Induction)},
which mirrors the inductive closure rules.
Namely, if a vector $\vec{\grf}[x]$ of first order formulas satisfies the 
construction rules for $\vec{\bfD}$, then it contains $\vec{\bfD}$:
\begin{equation}\label{eq:induction}
\const[\vec{\grf}] \; \ra \; (\wedge_i \forall x \; \bfD_i(x) \ra \grf_i[x])
\end{equation}
where $\const[\vec{\grf}]$ is the conjunction of the construction rules for the bundle,
	with each $\bfD_i(\bft)$ replaced by $\grf_i[\bft]$.
The formulas $\vec{\grf}$ are the {\em induction-formulas} 
of the delineation.

\ei

\medskip

\noindent
{\bf Example:}
Identifying $\dW=\{\ttzero,\ttone\}^*$ with the free algebra
generated from the nullary constructor \gre\ and the unary \ttzero\
and \ttone, the intrinsic theory $\bxbf{IT}(\dW)$  has as vocabulary
these constructors and a unary type-identifier $W$. 
Here we have the
\bi
\item inductive closure rules:
$$
\begin{array}{c} \infer{W(\gre)}{} \end{array}
\quad
\begin{array}{c} \infer{W(\ttzero(\bft))}{W(\bft)} \end{array}
\quad
\begin{array}{c} \infer{W(\ttone(\bft))}{W(\bft)} \end{array}
$$
\item
and inductive-delineation:
$$
\begin{array}{c}
        \infer{\grf(\bft)}
                {W(\bft) & \grf[\gre] 
			&
                        \deduce{ \grf[\ttzero (z)]}
                            {\deduce{\cdots}
                                {\{\grf[z]\}}}
			&
                        \deduce{ \grf[\ttone (z)]}
                            {\deduce{\cdots}
                                {\{\grf[z]\}}}
			}
\end{array}
$$
\ei

\begin{defi}
A unary program  $(P,\bff)$
is {\em provably of type} $\bfD \sra \bfE$ in a theory \bfT\ if \onemm
$\bfD(x) \ra \bfE(\bff(x))$
is provable in \bfT\ from the universal closure of the 
equations in $P$.\fn{Universal closure is needed, since the logic
here is first-order, rather than equational.}
\end{defi}


\begin{thm}\label{thm:provable=PA}\ 
\bxrm{\cite{Leivant-unipolar,Leivant-intrinsic}}.
\be
\item
A function $f$ over \dN\ has a program
provably of type $\ttN \sra \ttN$ in the intrinsic theory $\IT(\dN)$
iff it is a provably-recursive function
of Peano's Arithmetic, i.e.\ a function definable using 
primitive-recursion in finite types.
\item
\zero $f$ has a program proved to be of type $\ttN \sra \ttN$ using
only formulas in which \dN\ does not occur negatively 
iff $f$ is a primitive-recursive function.
\ee
\end{thm}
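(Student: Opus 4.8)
The plan is to prove each biconditional in two directions---a \emph{soundness} direction (provable typing implies membership in the stated class) and a \emph{completeness} direction (membership implies provable typing)---using the Canonicity Theorem \ref{thm:N-canon} as the bridge between provable typing and genuine totality. Throughout I would use that the graph of the partial function $g$ computed by $(P,\bff)$ is recursively enumerable, hence $\Sigma_1$-definable in arithmetic; once $g$ is known to be total, the assertion ``$f$ is provably recursive in PA'' amounts precisely to PA proving totality of this $\Sigma_1$ graph.

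For the soundness directions I would argue by interpretation. In Part~(1) the key input is that $\IT(\dN)$ is interpretable in a conservative extension of Peano Arithmetic---the single-inductive-bundle instance of Theorem \ref{thm:faithful}, first established in \cite{Leivant-intrinsic}. Under this interpretation the provable typing statement $\ttN(x) \ra \ttN(\bff(x))$ becomes a PA-provable assertion that the $\Sigma_1$ graph of $g$ is total, so $f$ is a provably recursive function of PA; soundness of the $\IT(\dN)$ axioms in every $\dN$-correct model, together with Canonicity \ref{thm:N-canon}, guarantees moreover that $g$ is genuinely total on $\dN$. In Part~(2) the soundness direction is subtler: I would replace the blunt interpretation by a finer proof-theoretic extraction---a realizability or functional (Dialectica) interpretation restricted to the fragment in which $\dN$ occurs only positively. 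The point is that $\dN$-positive induction formulas carry no nested hypotheses of the form $\forall x\;\ttN(x) \ra \cdots$, so the induction they license is predicative in the relevant sense and its realizers stay within the primitive recursive functions rather than recursion in finite types.

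For the completeness directions I would formalize the generators of each class inside the theory. In Part~(1) I would use the classical fact that the provably recursive functions of PA are exactly those definable by primitive recursion in finite types (\godel's System T), and proceed by induction on the build-up of a System-T term: for each T-definable functional I exhibit an equational program and prove the associated typing statement in $\IT(\dN)$. The base functions and closure under composition are routine; higher-type recursion is handled by taking the induction formula $\vec{\grf}[x]$ in the data-elimination schema (\ref{eq:induction}) to encode a universally quantified typing assertion, thereby simulating the higher-type recursor inside a first-order induction. In Part~(2) I would run the analogous construction restricted to the primitive recursive functions, checking at every step that the induction formulas used have no negative occurrence of $\dN$---which is possible exactly because ordinary primitive recursion requires only induction formulas built positively from $\ttN(\bff(x))$.

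The hard part will be making the syntactic restriction of Part~(2) match primitive recursion \emph{on the nose} in both directions at once. On the completeness side one must verify that no step in formalizing primitive recursion secretly introduces a negative occurrence of $\dN$---for instance through totality hypotheses threaded into an induction formula---and on the soundness side one must show that the $\dN$-positive fragment extracts nothing beyond primitive recursion, i.e.\ that the interpretation of positive induction remains within the primitive recursive functionals. For Part~(1) the technical crux is instead the first-order simulation of recursion in finite types: since $\IT(\dN)$ has no higher-type variables, the recursor must be emulated entirely through the choice of induction formula, and checking that this emulation is faithful and that its typing is actually \emph{provable} (not merely true) is where the real effort goes.
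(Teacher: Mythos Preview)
The paper does not prove this theorem: it is stated with a citation to \cite{Leivant-unipolar,Leivant-intrinsic} and followed only by a remark pointing the reader to \cite{Leivant-intrinsic} for examples and discussion. There is therefore no proof in the paper to compare your proposal against.

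That said, your outline is broadly in line with how the cited papers proceed. The completeness direction of Part~(1) is indeed handled by internalizing System~T (G\"{o}del's primitive recursion in finite types) via suitable choices of induction formula, and the soundness direction by an interpretation into arithmetic. For Part~(2), the cited work \cite{Leivant-unipolar} uses a realizability-style extraction calibrated to the positive fragment, which matches the spirit of what you sketch. One caution: your soundness argument for Part~(1) leans on Theorem~\ref{thm:faithful}, but in the present paper that theorem is downstream of Theorem~\ref{thm:provable=PA} in the exposition; you are right that the $\dN$-only case is already in \cite{Leivant-intrinsic}, so there is no circularity, but be explicit that you are invoking the earlier external result rather than the later general one.
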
 

\noindent Note that this characterization of
the provable functions of PA involves no particular choice of
base functions (such as addition and multiplication). 
See \cite{Leivant-intrinsic} for examples and discussion.

\subsection{Intrinsic theories for arbitrary data-systems}

Let \calD\ be a data-system.
The {\em intrinsic theory for \calD,} denoted $\IT(\calD)$, 
is a first order theory over the vocabulary of \calD,
whose axioms are the Separation axioms,
the inductive construction rules and coinductive
deconstruction rules of \calD, as well as their duals:
\bi
\item
\zero{\bf Inductive delineation (data-elimination, Induction)}:
If a vector $\vec{\grf}[x]$ of first order formulas satisfies the 
construction rules for an inductive bundle $\vec{\bfD}$, then it
contains $\vec{\bfD}$:
$$
\const[\vec{\grf}] \; \ra \; (\wedge_i \forall x \; \bfD_i(x) \ra \grf_i[x])
$$
where $\const[\vec{\grf}]$ is the conjunction of the construction rules for the bundle,
	with each $\bfD_i(\bft)$ replaced by $\grf_i[\bft]$.
\item
\zero{\bf Coinductive delineation (data-introduction, Coinduction)}:
If a vector $\vec{\grf}[x]$ of first order formulas satisfies the 
deconstruction rule for a coinductive bundle $\vec{\bfD}$, then it
is contained in $\vec{\bfD}$:
\begin{equation}\label{eq:coinduction-scheme}
\decomp[\vec{\grf}] \; \ra \; (\wedge_i \forall x \;  \grf_i[x] \ra \bfD_i(x))
\end{equation}
where $\decomp[\vec{\grf}]$ is the conjunction of the deconstruction 
rules for the bundle, with each $\bfD_i(\bft)$ replaced by $\grf_i[\bft]$.
\ei

\noindent A characterization result, analogous to Theorem
\ref{thm:provable=PA}(2), was proved in \cite{LeivantR-corecurrence}:
A function over a coinductive type is definable using corecurrence iff
its productivity is provable using coinduction for formulas in which
type-identifiers do not occur negatively.  The proof in
\cite{LeivantR-corecurrence} is for streams, the general result will
be proved elsewhere, as well as an analog Theorem
\ref{thm:provable=PA}(1).

The phrase {\em coinduction} is often mentioned in reference to the principle
enunciated by David Park,
{\em ``To prove two processes observationally equivalent, show that 
they are bisimilar"} (see e.g \cite{Sangiorgi09}).
The phrase ``observational equivalent" is sometimes taken to mean
``equal."  Park's principle is not directly derivable in the
intrinsic theory of given coinductive types because
the equality primitive of intrinsic theories is untyped, acting
as a rewrite rule (a {\em definitional equality} in the sense of
Martin-L\"{o}f's type theory).
However, intrinsic theories do derive Park's principle for equality-in-a-type.
Consider the following program for a fresh function identifier
\bxtt{eq}. 
\beqnas
\bxtt{eq}(\bfc(x_1,\ldots,x_r),\bfc(y_1,\ldots,y_r))
        & = & \bfc(\bxtt{eq}(x_1,y_1), \ldots , \bxtt{eq}(x_r,y_r))\\
        && \mywhite{\bgrx} \qquad\quad \hbox{\bfc\ a constructor of arity $r$}\\[1mm]
\bxtt{eq}(\bfc(x_1,\ldots,x_r),\bfd(y_1,\ldots,y_t)) &=&  \bgrx
        \qquad\quad \hbox{\bfc, \bfd\ distinct constructors}\\[1mm]
\eeqnas
Here \bgrx\ is a nullary constructor, not
used in any type definition of the data system.
That is, $\bxtt{eq}$ maps two equal hyper-terms into their common value,
and maps two distinct hyper-terms into a hyper-term
containing \bgrx, which is therefore in no type.
Now define an equality relation for a coinductive type \bfD\ by 
$$
x \feq_{\scriptbf{D}}\, y \quad \equiv \quad \bfD(\bxtt{eq}(x,y))
$$
Of course, this equality is undecidable, as indeed should be the case
for infinite hyper-terms.

Given a bi-simulation between $x$ and $y$, 
the program for $\bxtt{eq}$ can be used to obtain the premises
of coinduction for the unary predicate $\grl x.\bfD(\bxtt{eq}(x,y))$
(i.e.\ with $y$ as parameter).  Our Coinduction scheme (\ref{eq:coinduction-scheme})
then implies $\bfD(\bxtt{eq}(x,y))$, i.e.\ $x \feq_{\scriptbf{D}}\, y$.

\section{Proof theoretic strength}

\subsection{Innocuous function quantification}

Our general intrinsic theories refer to infinite basic objects 
(coinductive data),
in contrast to intrinsic theories for inductive data only, 
as well as traditional arithmetical theories.  
However, their deductive machinery does not imply the existence
of any particular coinductive object, as would be the case, for example, 
in the presence of some forms of the Axiom of Choice or of 
a comprehension principle. Coinductive objects can be specified, of course,
by programs, but such programs are treated as axioms, i.e.\
assumptions.

We show next that, as a consequence, any intrinsic theory \bfT\
is interpretable in a formal theory whose proof theoretic strength is
no greater than that of Peano Arithmetic.

\newcommand{\pas}{$\bxbf{PRA\!}^*$}

We take as starting point the formalism \bxbf{PRA} 
of Primitive Recursive Arithmetic,
with function identifiers for all primitive recursive functions,
and their defining equations as axioms. In addition, we have the
Separation axioms for \dN\ (as above), and the
schema of Induction for all formulas.\fn{See e.g.\ \cite{Simpson-SSOA}
for details and related discussions.}
It is well known that \bxbf{PRA} is interpretable in Peano's Arithmetic
(where only addition and multiplication are given as functions with
their defining equations).

Let \pas\ be  \bxbf{PRA} augmented with
function variables and quantifiers over them, as well as
free variables for functionals (i.e.\ functions
from numeric functions to numeric functions.)
The set of {\em terms} is built by type-correct
explicit definition (i.e.\ composition and application)
from number-, function-, and functional-variables, 
starting with \ttzero\ and identifiers for all primitive-recursive 
functions.
The theory has as axiom schema the Principle of Explicit Definition:
for each term $\bft[\vec{x},\vec{f}]$ of the extended language,
with number variables $\vec{x}$ and function variables $\vec{f}$,
$$
\forall \vec{f} \, \exists g \; \forall \vec{x} \;\;
	g(\vec{x}) = \bft[\vec{x},\vec{f}]
$$
There are no further axioms stipulating the existence of additional functions.

The schema of Induction
applies now to all formulas in the extended language.

\begin{lem}\label{lem:conservation}
The theory \pas\ is conservative over \bxbf{PRA}.
That is, if a formula in the language of \bxbf{PRA} is provable in \pas,
then it is provable already in \bxbf{PRA}.

Consequently, \pas\ is no stronger,
proof-theoretically, than \bxbf{PA}.
\end{lem}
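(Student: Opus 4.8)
The plan is to prove conservativity semantically, by showing that every model of \bxbf{PRA} expands to a model of \pas. Recall that the \bxbf{PRA} of this section already carries the full first-order induction schema and is a conservative (indeed definitional) extension of \bxbf{PA}. By the completeness theorem it therefore suffices to establish the following expansion property: for every structure $M \models \bxbf{PRA}$ there are a family $F$ of functions on $|M|$ and a family $\Phi$ of functionals such that the structure $(M,F,\Phi)$ models \pas. Indeed, granting this, if a formula $\grf$ in the language of \bxbf{PRA} were provable in \pas but not in \bxbf{PRA}, we could take $M \models \bxbf{PRA} + \lnot\grf$, expand it to a model of \pas, and contradict the assumption that \pas proves $\grf$; hence \pas is conservative over \bxbf{PRA}, and the proof-theoretic bound by \bxbf{PA} follows at once.

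To build the expansion I would let the function variables range over a family $F \subseteq |M|^{|M|}$ and interpret the functional parameters by a family $\Phi$, subject to the single closure demand that $F$ be closed under the term-forming operations of \pas: composition with the primitive-recursive functions of $M$ and application of the functionals in $\Phi$. This closure makes the Principle of Explicit Definition true in $(M,F,\Phi)$, since the function $\grl\vec{x}.\,\bft[\vec{x},\vec{f}]$ named by any term is then again a member of $F$. The separation axioms, the primitive-recursive defining equations, and all quantifier-free axioms are inherited verbatim from $M$, so the only axiom of \pas requiring genuine work is the induction schema in its extended form.

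The hard part is exactly the induction schema applied to formulas containing \emph{function quantifiers}. Here one cannot choose $F$ naively: over a nonstandard $M$, letting $F$ be the full function space validates full second-order induction and so would force $M$ to be standard, whereas letting $F$ be merely the first-order definable functions yields only a model of arithmetical comprehension with set-induction, in which induction for function-quantified formulas can fail (by Tarskian undefinability of truth, such a formula need not reduce to a first-order one over $M$). Thus $F$ must be chosen thin enough to define no proper inductive cut, yet rich enough for explicit definition. I would secure this by constructing $F$ generically, forcing the pair $(M,F)$ so that no formula of the extended language with parameters from $F$ defines a proper initial segment closed under successor; the verification that such a generic family validates the full induction schema is the technical core of the argument. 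An economical alternative, which I expect to be the cleanest to write up, is to observe that since the comprehension of \pas is only explicit definition, every function it posits is recursive, so \pas interprets into recursive comprehension with full induction (reading function graphs as sets); the first-order conservativity of that system over \bxbf{PA}, together with \bxbf{PRA} being a conservative extension of \bxbf{PA}, then yields the lemma. Either way, the interaction of full induction with the function quantifiers is the one place where the weakness of the explicit-definition comprehension must be used decisively.
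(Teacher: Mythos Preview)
The paper does not prove this lemma from scratch: it simply observes that \pas\ is a subtheory of the classical counterpart of $\bxbf{E\text{-}HA}^\omega$ and cites the Troelstra--van~Dalen conservativity of $\bxbf{E\text{-}HA}^\omega$ over \bxbf{HA}. The underlying argument there is the HRO (hereditarily recursive operations) interpretation: higher-type variables are read as codes of partial recursive objects inside the base theory.

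Your diagnosis of the obstacle (full induction for formulas with function quantifiers) is accurate, and your observation that taking $F$ to be all functions on $|M|$ forces $M$ to be standard is correct: the formula $\forall f\,(f\text{ is the characteristic function of an inductive set}\to f(n)=1)$ then defines the standard cut. But the forcing construction you propose is unnecessary. The choice of $F$ that works is the one you skipped between ``all functions'' and ``all first-order definable functions'': take $F$ to be the \emph{$M$-recursive} functions, i.e.\ those $f:|M|\to|M|$ for which some $e\in|M|$ satisfies $M\models \forall x\;\{e\}(x)\!\downarrow\,=f(x)$, and take $\Phi$ to be the $M$-recursive type-2 operations. Unlike the arithmetically definable functions, this class \emph{is} uniformly parametrized by indices, so every \pas-formula over $(M,F,\Phi)$ rewrites to a first-order formula over $M$ (replace $\forall f$ by $\forall e$ guarded by totality, and $f(t)$ by $\{e\}(t)$). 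The extended induction schema then reduces instance-by-instance to arithmetic induction in $M$, and Explicit Definition holds because \pas-terms compute $M$-recursive functions from $M$-recursive data. This is exactly the HRO model behind the cited proof, phrased as a model expansion rather than a syntactic interpretation.

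Your second route, interpreting \pas\ into recursive comprehension with full second-order induction and then invoking its conservativity over \bxbf{PA}, is also sound (after substituting fixed recursive functionals for the free functional variables, which you do not mention). But that conservativity is itself established by the model $(M,\mathrm{Rec}(M))$, i.e.\ by the very construction above, so the detour through RCA buys nothing.
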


\begin{proof}
The proof is virtually the same as that in
\cite[Prop. 1.14, p.\ 453]{TroelstraVDalen88}, that $\bxbf{E-HA}^\omega$
is conservative over \bxbf{HA}.\fn{I am grateful to
Ulrich Kohlenbach for pointing me to that reference.}
The use of classical logic, rather than constructive (intuitionistic)
logic, makes here no difference, and \pas\ is
a sub-theory of (the classical counterpart of) $\bxbf{E-HA}^\omega$.
\end{proof}

%

\newcommand{\sap}{$\bxbf{PRA}^*$}

The main result of this section, and the second of the paper,
is the following evaluation of the proof theoretic strength of
intrinsic theories, which turns out to be surprisingly modest. 

\begin{thm}\label{thm:interpretability}{\rm (Arithmetic interpretability)}
Every intrinsic theory is interpretable in \pas.
\end{thm}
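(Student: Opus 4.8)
The plan is to give a syntactic interpretation of $\IT(\calD)$ into \pas\ by \emph{arithmetizing the canonical model} \calA: code hyper-terms as number-theoretic functions, interpret constructors and types over that coding, and check that each axiom of $\IT(\calD)$ becomes a \pas-theorem. By Lemma \ref{lem:conservation} any such interpretation bounds the proof-theoretic strength of $\IT(\calD)$ by that of \bxbf{PA}.

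First I would fix a primitive-recursive coding of finite sequences of naturals as tree-addresses and represent a hyper-term over $\calC=\{\bfc_1,\ldots,\bfc_k\}$ (arities $r_1,\ldots,r_k$) by a function $f:\dN\to\dN$, where $f(\alpha)$ codes the constructor sitting at address $\alpha$, with a fixed sentinel at invalid addresses. A $\Pi^0_1$ \emph{coherence} predicate $\mathrm{Cohr}(f)$ (a primitive-recursive matrix quantified universally over addresses), saying that the valid addresses form a tree respecting the arities recorded by $f$, cuts out the domain $\Rset^{*}$ of the interpretation; each intrinsic-theory quantifier $\forall x\,\psi$ is relativized to $\forall f\,(\mathrm{Cohr}(f)\to\psi^{*})$. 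Each constructor $\bfc_i$ is interpreted by the functional that plants $\bfc_i$ at the root and grafts its arguments as the subtrees; these functionals are given by explicit definition, so \pas\ proves the domain nonempty (the nullary $\tto$) and closed under them, and the Separation axioms hold because the root symbol and the address-shifted subtrees are recovered by explicit operations — the arithmetical form of Proposition \ref{prop:separated-decomposition}. Equality is interpreted extensionally, as agreement of the coding functions on all addresses.

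The heart is the interpretation of the type-identifiers, defined by meta-level recursion on the finitely many bundles (equivalently on the rank of Definition \ref{dfn:type-rank}), using only quantification over the coded hyper-terms themselves, never over sets of them. For an \emph{inductive} bundle I would let $\bfD^{*}(f)$ assert the existence of a single finite, well-founded \emph{derivation} $w$ (coded by a number) of $\bfD(f)$ from the construction rules, whose leaves discharge to the already-defined lower-rank predicates on the relevant subtrees; this is the ``$\exists$-witness'' ($\grS$) reading of a least fixpoint. Dually, for a \emph{coinductive} bundle I would set $\bfD^{*}(f)\equiv\forall n\,\mathrm{Approx}^{n}(f)$, where $\mathrm{Approx}^{n}$ is defined by recursion on $n$ (``the depth-$n$ truncation of $f$ admits a labeling by bundle-types consistent with the deconstruction rules, with lower-rank side-conditions met''); this is the ``$\forall$-approximation'' ($\grP$) reading of a greatest fixpoint. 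Choosing these dual shapes lets both delineation schemes be verified by ordinary numerical induction in \pas: inductive delineation follows by induction on the size of $w$, and coinductive delineation by proving $\forall n\,(\grf_i(f)\to\mathrm{Approx}^{n}(f))$ by induction on $n$ — so that no infinite witness is ever constructed and nothing beyond the Principle of Explicit Definition is invoked. Full induction over the extended language (function quantifiers included) is exactly what is needed, since the induction-formulas $\grf$ translate to \pas-formulas that themselves contain the type predicates.

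The step I expect to be the main obstacle is verifying that the \emph{deconstruction rule itself} holds for the approximation reading of a coinductive type: from $\forall n\,\mathrm{Approx}^{n}(f)$ one must extract a single matching constructor-statement with subterms of the prescribed types, i.e.\ commute $\forall n$ past the finite disjunction over the rule's disjuncts. I would handle this finitarily: each $\mathrm{Approx}^{n+1}$ names a root-disjunct, the property ``some depth-$(n{+}1)$ labeling uses disjunct $d$'' is downward-closed in $n$ (via the monotonicity $\mathrm{Approx}^{n+1}\!\to\!\mathrm{Approx}^{n}$, itself proved by induction), and with only finitely many disjuncts a bounded pigeonhole argument yields one disjunct good for all $n$, hence subterms satisfying $\forall n\,\mathrm{Approx}^{n}$; this avoids \konig's Lemma entirely. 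The dual check, that the finite-derivation reading satisfies the construction rules and is closed on the nose, is routine. Assembling these, every axiom of $\IT(\calD)$ translates to a \pas-theorem, so $\IT(\calD)$ is interpretable in \pas. (When program-function symbols are adjoined with their defining equations as hypotheses, they are interpreted by the functions they compute over coded hyper-terms, again available through explicit definition and recursion on approximation depth; but the pure theory $\IT(\calD)$ is the substance of the claim.)
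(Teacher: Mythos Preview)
Your proposal is essentially the paper's own proof: code hyper-terms as numeric functions (the paper's $\bxit{Htm}$ is your $\mathrm{Cohr}$), interpret inductive types by the existence of a finite typing derivation and coinductive types by ``for every depth there is a consistent partial typing'', and then verify each axiom of $\IT(\calD)$ in \pas\ by numerical induction on derivation height (for Induction) and on approximation depth (for Coinduction), with the coinductive deconstruction rule handled by a finite pigeonhole on the root disjunct rather than by \konig's Lemma. The paper packages the coinductive side via an explicit primitive-recursive tree $T_D$ of expansion options (Theorem~\ref{thm:type-dfn}), but this is just a concrete realization of your $\mathrm{Approx}^n$; and one small correction to your parenthetical: program-function identifiers are interpreted as \emph{free} functional variables of \pas, not as the functions they compute, so the program equations remain hypotheses under the translation.
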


As will become clear from the proof of Theorem \ref{thm:type-dfn} below,
Theorem \ref{thm:interpretability} depends on our avoiding
infinite-branching type constructions, such as W-types.

\subsection{Representing data by numeric functions}

We posit canonical primitive-recursive coding-scheme
$\lng \cdots \rng$ for sequences of natural numbers.
More generally, we assume that
basic syntactic operations on finite data-terms, such as application and
sub-term extraction, are represented by primitive recursive
functions.
See e.g.~\cite{Kleene52,Rose84} for details, related
notations, and
proofs of the closure of the primitive-recursive functions and predicates 
under major operations, such as bounded quantification and minimization.

For each constructor \bfc, let $\bfc^\sharp$ be a distinct numeric code.
We say that a function $f: \; \dN \ra \dN$
{\em represents} a hyper-term $t \in \Rset$
if $f$ maps addresses $a= \lng a_0 \cdots a_k \rng \in \dN$ 
to the code $\bfc^\sharp$ of the constructor
\bfc\ at address $a$ of $t$, whenever such a constructor
exists.
(We could insist that $f(a)$ be some flag, say $0$,
when $t$ has no constructor at address $a$,
thereby determining $f$ uniquely from $t$; but this would be of no use
to us, and would imply the undecidability of determining whether
two computable functions represent the same {\em finite} term.)

For example, the finite term
$\ttp(\tte,\ttzero(\tte))$ is represented by $f$ provided
$f\lng\rng= \ttp^\sharp$,
$f\lng 0\rng= \tte^\sharp$,
$f\lng 1\rng= \ttzero^\sharp$, and
$f\lng 1,0\rng = \tte^\sharp$.
Similarly, the infinite
01-word  $(\ttzero,\ttone)^\omega = \ttzero\ttone\ttzero\ttone\cdots$ 
is represented by $f$
provided $f\lng 0^{2n}\rng = \ttzero^\sharp$ and 
$f\lng 0^{2n+1}\rng = \ttone^\sharp$ ($n \geq 0$).

It follows that an $r$-ary constructor \bfc\ is represented by a
functional $\bfc^\ntp$ provided
\beqnaa
\bfc^\ntp(f_1, \ldots, f_r)\lng\rng &=& \bfc^\# \\
\bfc^\ntp(f_1, \ldots, f_r)(\lng i \rng * a) &=& f_i(a) & i=1..r \\
\eeqnaa

\subsection{Representing types by formulas}

Consider a purely co-inductive data-system.
One can state that a hyper-term $t$ is of type \bfD\ by asserting
the existence of a correct type decoration of the nodes of $t$, with
the root assigned type \bfD.  The correctness of the decoration can
be expressed by a single numeric $\forall$, but we would need to have
an existential function-quantifier to state, in the first\
place, the existence of the decoration.
We show here that no such function quantification is needed.
Referring to Definition \ref{dfn:type-rank}, we have:

\begin{thm}\label{thm:type-dfn}
\zero
$\grS_n$ types are defined by $\grS_n^0$ formulas, and
$\grP_n$ types by $\grP_n^0$ formulas.
\end{thm}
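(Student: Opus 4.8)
The plan is to prove the theorem by induction on the bundle index $i$, showing that each type in a bundle $\vec{\bfD}_i$ of rank $r$ is defined by a $\grS_r^0$ formula if the bundle is inductive and by a $\grP_r^0$ formula if it is coinductive, where the superscript $0$ refers to the arithmetical hierarchy in the language of \pas\ with the representing function as a free parameter and \emph{no} function quantifiers. By Definition \ref{dfn:type-rank} the rank of $\bfD_{ij}$ is the rank of the truncated system $\vec{\bfD}_1 \ldots \vec{\bfD}_i$, so the induction on $i$ tracks exactly the alternation depth. Throughout I use that reading off the constructor of the represented hyper-term at a given address, and forming the shifted function $\grl b.\, f(a * b)$ representing the subterm at address $a$, are explicit (primitive-recursive) operations available in \pas; hence ``the subterm of $f$ at address $a$ has type $\bfQ$'' is expressed by substituting the shift into the $\bfQ$-formula, at no cost in quantifier complexity. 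Likewise, whether a finite address lies on the inductive, resp.\ coinductive, skeleton of $f$ (following the child-selections dictated by the relevant rules) is a bounded $\grD_0^0$ predicate of $f$ and the address.

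Both inductive steps rest on a single idea: membership is witnessed by a \emph{finite} object, codable by one number. For an inductive bundle $\vec{\bfD}_i$ of rank $\grS_n$, I would let $\bfD_{ij}(f)$ hold iff there \emph{exists} a finite labelled derivation tree $D$: a finite tree of $\lng\text{address},\text{type-index}\rng$ pairs with root $\lng \lng\rng, j\rng$, where at each node the constructor of $f$ together with the children's labels instantiates a construction rule (\ref{eq:induction-rules}) of the bundle, and each leaf is either a nullary constructor or an address deferred to a given type $\bfQ_\ell \in \vec{\bfG}$, at which we assert $\bfQ_\ell$ of the corresponding shifted function. Since every construction rule has finitely many premises, such derivations are finitely branching and well-founded, hence finite and codable by a number; well-formedness is bounded in $D$ and $f$, and the finitely many leaf-conditions are given-type formulas. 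By the induction hypothesis and the inclusions $\grS_m^0,\grP_m^0 \subseteq \grD_{m+1}^0$, every given type for a rank-$\grS_n$ bundle --- whether from the inductive block itself ($\grS_n^0$), the coinductive block just below ($\grP_{n-1}^0 \subseteq \grS_n^0$), or lower blocks --- is $\grS_n^0$-definable. A bounded conjunction of $\grS_n^0$ formulas is $\grS_n^0$, and prefixing the single unbounded $\exists D$ preserves $\grS_n^0$; so $\bfD_{ij}$ is $\grS_n^0$.

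For a coinductive bundle $\vec{\bfD}_i$ of rank $\grP_n$ I would instead define \emph{non}-membership at level $\grS_n^0$ and negate it. Here $\neg\bfD_{ij}(f)$ says that no consistent infinite labelling (``successful run'') of the coinductive skeleton of $f$ starts from type-index $j$: at each node some disjunct $\grq_l$ of the deconstruction rule (\ref{eq:coinduction}) must be realised, its bundle-typed children recursing and its given-typed children $y_\ell$ satisfying $\bfQ_\ell(y_\ell)$. The tree of partial successful runs is finitely branching (finitely many disjuncts, finite arities), so by \konig's Lemma it has no infinite branch iff it is finite, iff there is a depth $N$ by which every partial run is already ``dead''. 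A partial run dies at a node either because no disjunct matches the constructor of $f$ there (a bounded condition) or because a required given-type child-condition fails, i.e.\ $\neg\bfQ_\ell(y_\ell)$ holds. Every given type for a rank-$\grP_n$ bundle is $\grP_n^0$-definable (same-block coinductive types at $\grP_n^0$; the inductive block below at $\grS_{n-1}^0 \subseteq \grP_n^0$; lower blocks below that), so each failure clause $\neg\bfQ_\ell$ is $\grS_n^0$. Thus ``$f$ has a uniform depth-$N$ refutation'' is, for fixed $N$, a bounded combination of $\grS_n^0$ conditions over the primitive-recursively bounded number of partial runs of depth $\le N$, hence $\grS_n^0$; and $\neg\bfD_{ij}(f) \equiv \exists N\,(\cdots)$ is $\grS_n^0$, so $\bfD_{ij}$ is $\grP_n^0$. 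The base cases $i=1$ are the instances with no given types, where the leaf and death conditions are purely bounded, yielding $\grS_1^0$ (inductive) and $\grP_1^0$ (coinductive).

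The main obstacle is the coinductive direction, and precisely the passage from ``no successful infinite run exists'' to ``some finite depth $N$ already refutes every run''. This is exactly where \konig's Lemma is used, and it is sound only because both the term-branching and the run-branching are finite --- the single place the hypothesis that \calD\ carries no infinite-branching (e.g.\ W-type) constructors enters, accounting for the caveat flagged after Theorem \ref{thm:interpretability}. A secondary point needing care is the quantifier bookkeeping: one must check that $\grS_n^0$ and $\grP_n^0$ are closed under finite $\wedge,\vee$ and bounded quantification and absorb a single leading quantifier of the matching sign, together with the inclusions $\grS_{n-1}^0,\grP_{n-1}^0 \subseteq \grD_n^0$ used to realign the lower-rank given types across each induction/coinduction alternation; these are routine facts about the arithmetical hierarchy, here relativised to the function parameter $f$.
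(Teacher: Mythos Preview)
Your proposal is correct and follows essentially the same approach as the paper. The inductive case is identical (existence of a finite derivation tree, coded by a single number, with leaf conditions for lower-rank types). In the coinductive case the paper builds the same finitely-branching tree of partial typings and, via Weak K\"{o}nig's Lemma, defines membership directly as the $\grP_n^0$ statement ``for every height $h$ there is a node consistent with $t$,'' whereas you take the contrapositive and define non-membership as the $\grS_n^0$ statement ``there is a depth $N$ by which every partial run has died''; these are the two faces of the same K\"{o}nig-Lemma argument. Your induction on the bundle index $i$ (rather than on $n$) is slightly more careful about given types coming from earlier same-rank bundles, but this does not affect the complexity bound in either version.
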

\begin{proof}
The proof is by induction on $n\geq 1$.
If a type \bfD\ is $\grS_n$, i.e.\ is in a bundle defined inductively 
from $\grP_{n-1}$ types (where we take $\grP_0$ to be empty), 
then a hyper-term $t$ has type \bfD\ iff there is a finite deduction
establishing $\bfD(t)$ from typing-statements of the form $\bfE_j(t_j)$, 
with each $\bfE_j$ a type of lower rank, 
and $t_j=\grP(t)$ for some deep-destructor \grP.
By IH each $\bfE_j$ is defined by some $\grP_{n-1}^0$
formula $\bfE^\ntp[t_j]$ and the correctness of the finite type-derivation is
clearly a primitive-recursive predicate. Thus $\bfD$ is definable
by existential quantification over $\grP_{n-1}^0$ formulas, i.e.\
by a $\grS_n^0$ formula $\bfD^\ntp$.

Consider now a $\grP_n$ type \bfD.
We shall concretize the general argument using a running example,
with types $\ttD$ and $\ttE$ defined by a common coinduction, 
and \ttT\ a (previously defined) $\grS_{n-1}$ type:
\beqnas
\ttD(x) & \ra & \exists y,z \; x=\ttp(y,z) \wedge \ttD(y) \wedge \ttE(z) \\ 
& & \vee \; \exists y,z \; x=\ttp(y,z) \wedge \ttT(y) \wedge \ttD(z) \\ 
& & \vee \; \exists y \; x=\ttf(y) \wedge \ttE(y)\\[1mm]
\ttE(x) & \ra & \exists y,z \; x=\ttp(y,z) \wedge \ttE(y) \wedge \ttD(z)
\eeqnas
The decomposition rule for each type has a number of constructor-statements
as choices, in our example \ttD\ has three and \ttE\ one.
Each choice determines a main constructor, and types for the component.
The spelling out of \ttD\ into three options can be represented graphically:

\xfig{coind-typing1}{3.2cm}

We continue an expansion of all typing options for a hyper-term
in \ttD.  That is, we construct a tree $T_D$, where a node of height $h$
consists of a finite tree (of height $\leq h$), with types at the leaves, and
a pair of a type and a constructor at internal nodes.  Each such 
node represents a possible partial typing of a 
hyper-term of type \ttD.
The children of each such node $N$ are the
local expansions of the lowermost-leftmost unexpanded leaf, with a type
in the bundle considered.  (E.g., in our running example, 
the leaves with type \ttT\ are not expandable, and are left alone.)
Put differently, the leaves are expanded
in a breadth-first order.  (We refrain from expanding all expandable leaves 
at each step, because the resulting tree, albeit finitely-branching, would
have unbounded degree.)

\xfig{coind-typing2}{6cm}

\noindent
A few nodes of height 3 are given here:

\xfig{coind-typing3}{8cm}

Note that the tree $T_D$ is primitive recursive, i.e.\ there is
a primitive-recursive function that, for every address \gra,
gives (a numeric code for) the node at address \gra.

A hyper-term $t \in \Rset$ is {\em consistent} with a node $N$ as above
if its constructor-decomposition is consistent with the 
tree of constructors in $N$, and for every deep-destructor \grP,
if $N$ has at address \grP\ a type \bfE\ of lower rank,
then $\grP(t)$ has type \bfE. The consistency of
a hyper-term $t$ with a node $N$ is thus definable by a $\grS_{n-1}^0$
formula.

A hyper-term has type \ttD\ iff
there is an unbounded (i.e.\ infinite or terminating) branch of the
tree $T$ above, every node of which is consistent with $t$.
The existence of such a branch is equivalent, by Weak K\"{o}nig's Lemma,
to the existence, for every $h > 0$, of a node $N$ of height $h$
in $T_D$, which is consistent with $t$.  Since consistency of $t$ with $N$
is definable by a $\grS_{n-1}^0$ formula, this property is $\grP_n^0$.
\end{proof}

\subsection{Interpretation of terms}

For an $r$-ary constructor \bfc\ let $\hat{\bfc}$ denote the PR functional
that maps functions $f_0,\ldots , f_{r-1}$ representing hyper-terms 
$t_0,\ldots, t_{r-1}$ to a function representing the hyper-term
obtained by rooting $t_0,\ldots , t_{r-1}$ from the symbol \bfc, i.e.
\beqnas
\hat{\bfc}(f_0,\ldots , f_{r-1})\lng\rng &= & \bfc^\sharp\\
\hat{\bfc}(f_0,\ldots , f_{r-1})\lng i \rng * a &= & f_i(a)
\eeqnas
Recall that we posit the presence in the vocabulary of \sap\
of identifiers for all PR functionals, in particular $\hat{\bfc}$.

Next, we define a mapping $\bft\mapsto \bft^\ntp$ from
terms of \bfT\ to terms of \pas.  We posit that the identifiers of \pas\
for PR functions and functionals are disjoint from the program
identifiers of intrinsic theories.

\bi
\item
For a variable $x$ of \bfT\ (intended to range over hyper-terms)
we let $x^\ntp$ be a fresh unary function variable of \pas\
(intended to range over functions representing hyper-terms).

\item
For a constructor \bfc\ of arity $r\geq 0$, let
$(\bfc(\bft_1 \ldots \bft_r))^\ntp \df \hat{\bfc}(\bft_1^\ntp \ldots \bft_r^\ntp)$.

\item
For a program-function \bff\ of arity $r \geq 0$ (i.e.\ a free variable
denoting a function between hyper-terms), let 
$(\bff(\bft_1 \ldots \bft_r))^\ntp \df 
	\bff^\ntp(\bft_1^\ntp \ldots \bft_r^\ntp)$,
where $\bff^\ntp$ is a fresh functional variable of \pas, of arity $r$.
\ei

\subsection{Interpretation of formulas}\enlargethispage{\baselineskip}

Finally, we define a mapping $\grf \mapsto \grf^\ntp$ from formulas of \bfT\
(possibly with program-functions) to formulas of \pas.
Let $\bxit{Htm}[g]$ be a PR formula stating that the function $g$
represents a hyper-term.

\bi
\item
\zero $(\bft = \bfq)^\ntp$ \quad is \quad $\bft^\ntp = \bfq^\ntp$.
\item
\zero $(\bfD(\bft))^\ntp$ \quad is \quad $\bfD^\ntp[\bft^\ntp]$, where $\bfD^\ntp$ is the
arithmetic formula (possibly with free function and functional variables)
that defines \bfD\  (Theorem \ref{thm:type-dfn}).

\item
\zero $(\grf\wedge\grq)^\ntp$ \quad is \quad $\grf^\ntp\wedge\grq^\ntp$,
and similarly for the other connectives.

\item
\zero $(\forall x \; \grf)^\ntp$ \quad is \quad 
	$\forall x^\ntp \; \bxit{Htm}[x^\ntp] \ra \grf^\ntp$; \quad
$(\exists x \; \grf)^\ntp$ \quad is \quad 
        $\exists x^\ntp \; \bxit{Htm}[x^\ntp] \wedge \grf^\ntp$.
\ei

\begin{prop}\label{prop:faithfulness}
The mapping $\grf \mapsto \grf^\ntp$ is semantically faithful; that is,
for each formula $\grf[\vec{x},\vec{\bff}]$ of \bfT, with
free object variables among $\vec{x}$  and program-variables
among $\vec{\bff}$,
$$
\calA,\; [\vec{x}\! \leftarrow\! \vec{t}, 
		\; \vec{\bff} \! \leftarrow \! \vec{g}]
	\; \models \; \grf
$$
iff for all unary functions $\vec{h}$ over \dN\ representing
(respectively) the hyper-terms $\vec{t}$, and for all functionals
$\vec{G}$ representing (respectively) the functions $\vec{g}$,
$$
\calN,\; [\vec{x}^\ntp\! \leftarrow\! \vec{h},
	\; \vec{\bff}^\ntp\! \leftarrow\! \vec{G}]
        \; \models \; \grf^\ntp
$$
In particular, if \grf\ is a closed formula of \bfT, then
\grf\ is true in the canonical
model \calA\ of the data-system iff $\grf^\ntp$ is true in the 
standard model of \pas.
\end{prop}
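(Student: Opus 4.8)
The plan is to prove the biconditional by induction on the structure of $\grf$, after isolating the two facts that make the leaves work. The first is a \emph{term-correctness} statement: whenever each $h_i$ represents the hyper-term $t_i$ and each functional $G_j$ represents the function $g_j$, the term $\bft^\ntp$ of \pas, evaluated under $[\vec{x}^\ntp\la\vec{h},\,\vec{\bff}^\ntp\la\vec{G}]$, represents the hyper-term denoted by $\bft$ in $\calA$ under $[\vec{x}\la\vec{t},\,\vec{\bff}\la\vec{g}]$. I would prove this by induction on $\bft$: for a variable it is the hypothesis on the $h_i$; for a constructor it is immediate from the defining equations of $\hat{\bfc}$, which place the constructor code at the root and relay each subtree to the corresponding $f_i$; and for a program-function it is exactly the clause defining when $\vec{G}$ represents $\vec{g}$. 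The second fact is the \emph{type-correctness} already furnished by Theorem \ref{thm:type-dfn}: for each type-identifier $\bfD$ and each $h$ representing $t$, the arithmetic formula $\bfD^\ntp[h]$ holds in $\calN$ exactly when $t\in\bfD_{\cal A}$, the coinductive ($\grP_n$) direction passing through Weak \konig's Lemma.

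Granting these, the induction is routine except at the base. The propositional connectives are handled by the fact that the translation commutes with the Boolean operations, so the biconditional transfers from the immediate subformulas by the induction hypothesis. For the quantifiers the guard $\bxit{Htm}$ is exactly the bridge: to match $\calA\models\forall x\,\grf$ with $\calN\models\forall x^\ntp\,(\bxit{Htm}[x^\ntp]\ra\grf^\ntp)$ I would use (i) that every hyper-term of $\Rset$ is represented by some $h$ satisfying $\bxit{Htm}[h]$, and (ii) that every $h$ with $\bxit{Htm}[h]$ represents a hyper-term, namely the one obtained by reading its constructor codes along the addresses it designates. Thus the object quantifier of $\calA$ and the guarded function quantifier of $\calN$ range over matching families, and the induction hypothesis closes the step; the existential case is dual.

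The delicate point, and the step I expect to be the main obstacle, is \emph{representation-independence} at the atoms: the truth of an atomic $\grf^\ntp$ under $[\vec{x}^\ntp\la\vec{h},\ldots]$ must depend only on the hyper-terms represented, not on the chosen representatives. For a type-atom $\bfD(\bft)$ this is already secured by Theorem \ref{thm:type-dfn}, whose consistency predicates inspect only the constructor-decomposition of the argument and the types of its deep-destructors, all of which sit at addresses the hyper-term occupies. For an \emph{equation} $\bft=\bfq$ the tension is sharper, since two functions may represent the same hyper-term yet differ at addresses the hyper-term never reaches, so bare extensional equality of $\bft^\ntp$ and $\bfq^\ntp$ is not representation-independent. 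I would resolve this by reading the translated equality as agreement of $\bft^\ntp$ and $\bfq^\ntp$ at the addresses of their common hyper-term (equivalently, by working with the flag-canonicalised representatives that $\bxit{Htm}$ singles out); term-correctness then gives that $\bft^\ntp$ and $\bfq^\ntp$ represent the denotations of $\bft$ and $\bfq$ in $\calA$, and so agree there precisely when those denotations coincide. With both atomic cases established in this representation-independent form, the structural induction assembles them into the stated faithfulness, and the closed-formula corollary falls out as the case with no free variables.
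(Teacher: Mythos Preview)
Your approach is exactly the paper's: its entire proof reads ``straightforward by structural induction on $\grf$'', with no further detail. You have in fact gone beyond the paper by isolating the term-correctness lemma and the representation-independence issue for equality atoms---a genuine subtlety given the paper's explicitly non-canonical representation of hyper-terms---which the paper does not address at all.
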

\begin{proof}
The proof is straightforward by structural induction on \grf.
\end{proof}

\subsection{An Interpretability Theorem}

We finally show that the interpretation is proof-theoretically
faithful.

\begin{thm}\label{thm:faithful}
If a closed formula
$\grf$ is provable in the intrinsic theory \bfT,
then $\grf^\ntp$ is provable in \pas.

More generally: if a formula $\grf[\vec{x}]$, with free
variables among $\vec{x}$,
is provable in \bfT, then
\quad $ \bxit{Htm}[\vec{x}] \ra  \grf^\ntp[\vec{x}] $ \quad
is provable in \pas.
\end{thm}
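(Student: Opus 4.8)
The plan is to argue by induction on the length of a derivation of $\grf$ in \bfT, viewing the map $(\cdot)^\ntp$ as a \emph{relativized interpretation} whose relativizing predicate is $\bxit{Htm}$. Two obligations must be met: that the purely logical inferences of (classical) first-order logic survive relativization, and that the translation of every non-logical axiom of \bfT\ is a theorem of \pas. The base of the induction is the discharge of the axioms; the step is the discharge of the logical rules.

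First I would record the term lemma on which relativization rests: for every term $\bft[\vec x]$ of \bfT\ built from object variables and constructors, \pas\ proves $\bxit{Htm}[\vec x^\ntp] \ra \bxit{Htm}[\bft^\ntp]$. This is an immediate induction on \bft, the only case being a constructor \bfc, where the defining equations of the functional $\hat{\bfc}$ exhibit $\hat{\bfc}$ as sending functions representing hyper-terms to a function representing a hyper-term. Granting this, the standard relativization argument applies: each propositional rule is preserved because $(\cdot)^\ntp$ commutes with the connectives, and each quantifier rule is preserved because the proof-side condition generated on instantiation, namely $\bxit{Htm}[\bft^\ntp]$ for the instantiating term \bft, is discharged by the term lemma, while the guards in $(\forall x\,\grf)^\ntp$ and $(\exists x\,\grf)^\ntp$ supply precisely the hypotheses and conclusions the relativization needs. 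This is also the source of the extra antecedent $\bxit{Htm}[\vec x]$ in the general statement. (When the result is applied to typing statements, a program-function \bff\ enters only through its universally closed defining equations, which are carried along as equational constraints on a fresh functional variable $\bff^\ntp$; such variables occur in none of the \bfT-axioms and so do not disturb the relativization, whose only term-formers are the constructors already covered by the term lemma.)

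It then remains to discharge the axioms. The \emph{Separation axioms} translate into injectivity and range-disjointness of the constructor functionals $\hat{\bfc}$: disjointness is read off at the empty address ($\hat{\bfc}(\vec f)\lng\rng = \bfc^\sharp \neq \bfd^\sharp$), and injectivity by evaluating at $\lng i\rng * a$; both are provable already in \bxbf{PRA}. The \emph{construction} and \emph{deconstruction rules} translate into statements that are built into the very definitions of $\bfD^\ntp$ furnished by Theorem \ref{thm:type-dfn}: for an inductive $\bfD_i$, the formula $\bfD_i^\ntp[t]$ asserts the existence of a finite typing-derivation, and a construction rule corresponds to appending one inference to such a derivation, while dually a deconstruction rule is obtained by reading off the root step of the branch-condition. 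The \emph{Induction} scheme (\ref{eq:induction}) is handled by numeric induction inside \pas: assuming the translated closure hypothesis $(\const[\vec\grf])^\ntp$ together with $\bfD_i^\ntp[t]$, one inducts on the size of the finite derivation witnessing $\bfD_i^\ntp[t]$ to conclude $\grf_i^\ntp[t]$, which is legitimate because \pas\ carries induction for all formulas of its extended language, including those mentioning the function variables of $\vec\grf$.

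The main obstacle is the \emph{Coinduction} scheme (\ref{eq:coinduction-scheme}). Here $\bfD_i^\ntp[t]$ is the $\grP_n^0$ formula of Theorem \ref{thm:type-dfn} stating that for every height $h$ the primitive-recursive tree $T_{\bfD_i}$ has a node of height $h$ consistent with $t$. Assuming the translated hypothesis $(\decomp[\vec\grf])^\ntp$ and $\grf_i^\ntp[t]$, I would establish this $\grP_n^0$ statement by induction on $h$, maintaining as invariant that $T_{\bfD_i}$ has a node $N$ of height $h$ consistent with $t$ such that every leaf of $N$ labelled by a bundle-type $\bfD_j$ points to a subterm $\grP(t)$ satisfying $\grf_j^\ntp$; one extends $N$ by a single breadth-first expansion step, whose legality is furnished by one application of $(\decomp[\vec\grf])^\ntp$ to that subterm. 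The delicate point, and the reason \pas\ suffices, is that one never constructs an actual infinite branch (which would require Weak \konig's Lemma or a comprehension/choice principle unavailable in \pas); the $\grP_n^0$ definition demands only, for each finite $h$ \emph{separately}, a consistent node, and such a node is produced by a bounded, explicitly-definable search that arithmetic induction can drive, the requisite partial typings and subterm-addresses being introduced through the Explicit Definition principle of \pas. Proposition \ref{prop:faithfulness} offers a semantic sanity check that each translated axiom is at least true in the standard model, but it is the syntactic induction just sketched that actually delivers provability.
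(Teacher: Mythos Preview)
Your proposal is correct and follows essentially the same route as the paper: structural induction on derivations, with the logical rules handled by relativization and each axiom group discharged separately, the crucial case being Coinduction via numeric induction on the height $h$ in the tree $T_{\bfD_i}$. You are in fact more explicit than the paper in two places: you isolate the term lemma $\bxit{Htm}[\vec x^\ntp]\ra\bxit{Htm}[\bft^\ntp]$ that makes the quantifier rules go through under relativization (the paper just says the quantifier inferences are ``trivially preserved''), and for Coinduction you spell out the strengthened invariant---that the expandable leaves of the height-$h$ node carry subterms satisfying the corresponding $\grf_j^\ntp$---which is exactly what is needed to drive the induction step, whereas the paper merely asserts that the existence of a consistent node at every height ``follows outright'' by induction on $h$.
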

\begin{proof}
The proof proceeds by structural induction on derivations.
\begin{description}
\item[Logic]
The propositional and quantifier inferences are trivially pressured 
by the interpretation.

\item[Separation]
The case of the Separation Axioms is immediate by the definition of the
interpretation.

\item[Inductive construction]
Consider the construction axiom (\ref{eq:induction-packaged})
for an inductive bundle $\vec{\bfD}$, 
$$
\grq_1 \vee \cdot \cdot \cdot \vee \grq_k  \;\ra\;  \bfD_i(x)
$$
where each $\grq_i$ is a constructor-statement.
The interpretation of (\ref{eq:induction-packaged}) is 
$$
\bxit{Htm}[\vec{x}^\ntp] \ra
(\grq_1^\ntp \vee \cdot \cdot \cdot \vee \grq_k^\ntp  \;\ra\;  \bfD_i^\ntp[x^\ntp])
$$
with
$$
\grq_i^\ntp \quad \hbox{of the form} \quad
        \exists y_1^\ntp \ldots y_r^\ntp \in \bxit{Htm}\; \; 
	x^\ntp = \bfc^\ntp(\vec{y})
        \; \wedge \; \bfQ_1^\ntp[y_1] \; \wedge \cdot\cdot\cdot \wedge 
	\bfQ_r^\ntp[y_r]
$$
Recall (from Theorem \ref{thm:type-dfn}) that 
$\bfD_i^\ntp[x^\ntp]$ states the existence of a finite type derivation \grD\ of
$\bfD[x^\ntp]$ from statements of the form
$\bfE[\grP(x^\ntp)]$ with \bfE\ of lower rank and \grP\ a deep-destructor.  
Thus one of the decompositions $\grq_i^\ntp$ must be true for $x^\ntp$,
with the correctness of the $\bfQ_j^\ntp[y_j]$ true by induction on the
height of \grD.

\item[Induction]
Given an inductive bundle $\vec{\bfD}$,
the interpretation of $\vec{\bfD}$-induction for
formulas $\vec{\grf}$ (\ref{eq:induction}) is
\begin{equation}\label{eq:ind-interp}
\bfD_i^\ntp[x^\ntp]  \; \ra \;
(\const^\ntp[\vec{\grf}^\ntp] \; \wedge \; \bxit{Htm}[x^\ntp]
	\; \ra \; \grf_i^\ntp[x^\ntp])
\end{equation}
Recall that $\bfD_i^\ntp[x^\ntp]$ states the existence of a finite derivation
\grD\ of $\bfD(x^\ntp)$
from formulas of the form $\bfE(\grP(x^\ntp))$, where \bfE\ is of lower rank than \bfD,
and \grP\ is a deep-destructor.
The conclusion of (\ref{eq:ind-interp}) is straightforward by cumulative (i.e. course-of-value)
induction on the height of \grD.

\item[Coinductive deconstruction]
A deconstruction axiom (\ref{eq:coinduction}) for a coinductive bundle 
$\vec{\bfD}$ has the interpretation
$$
\bfD_i^\ntp[x^\ntp] \; \ra \; \grq_1^\ntp \vee \cdot \cdot \cdot \vee \grq_k^\ntp
$$
where each $\grq_i$ is a constructor-statement.
Recall that the definition of $\bfD_i^\ntp$ in this case (Theorem 
\ref{thm:type-dfn}) refers to the tree $T_D$ of expansion-options
for objects of type \bfD.  Continuing our running example in the proof
of Theorem  \ref{thm:type-dfn}, $\bfD_i^\ntp[x^\ntp]$ implies the existence,
at every height, of an expansion of \bfD\ which is consistent with
the structure of $x^\ntp$.  Nodes of height 2 which are consistent
with $x^\ntp$ give its decomposition, say as $\ttp(y^\ntp,z^\ntp)$.
One of these nodes must have above it nodes of arbitrary height
consistent with $x^\ntp$.\fn{Note that we do not use here 
Weak K\"{o}nig's Lemma,
as we do not assert the existence of an infinite branch as a consequence.}
If that node is the leftmost, giving $\ttT(y^\ntp)$ and $\ttD(z_0)$, then we
have $\ttD^\ntp[z^\ntp]$ by assumption, and $\ttT^\ntp[y^\ntp]$ holds 
by the definition of $\ttD^\ntp$ (since the node for $\ttT(y^\ntp)$ is 
a leaf of $T_D$).

\item[Coinduction]
Given a coinductive bundle $\vec{\bfD}$,
the interpretation of $\vec{\bfD}$-coinduction for
formulas $\vec{\grf}$ (\ref{eq:coinduction}) is
\begin{equation}\label{eq:coind-interp}
\grf_i^\ntp[x^\ntp] \; \ra \;
((\deconst^\ntp[\vec{\grf}^\ntp] \; \wedge \bxit{Htm}[x^\ntp])
	\; \ra \bfD_i^\ntp[x^\ntp])
\end{equation}
The conclusion of (\ref{eq:coind-interp}) is established by showing that
the tree $T_D$ (see the proof of Theorem \ref{thm:type-dfn})
has a node consistent with $x^\ntp$ at any given height $h$.
This follows outright from the assumptions of (\ref{eq:coind-interp}) by {\em induction} 
on $h$.\qedhere
\end{description}
\end{proof}

\section{Applications and further developments}

Intrinsic theories provide a minimalist framework for reasoning about
data and computation.  The benefits were already evident when dealing with
inductive data only, including a characterization of the provable functions
of Peano's Arithmetic without singling out any functions beyond 
the constructors, a particularly simple proof of Kreisel's Theorem that 
classical arithmetic is $\grP_2^0$-conservative over intuitionistic 
arithmetic \cite{Leivant-intrinsic}, and a particularly simple 
characterization of the primitive-recursive functions \cite{Leivant-unipolar}.
The latter application guided a dual characterization of the primitive
corecursive functions in terms of intrinsic theories with positive coinduction
\cite{LeivantR-corecurrence}.

Intrinsic theories are also related to type theories, 
via Curry-Howard morphisms,
providing an attractive framework for extraction of computational contents
from proofs, using functional interpretations and realizability methods.  
The natural extension of the framework to coinductive methods, 
described here, suggests new directions
in extracting such methods for coinductive data as well.
Recent work by Berger and Seisenberg \cite{BergerS12} 
has already explored similar
ideas. 

Finally, intrinsic theories are naturally amenable to ramification, 
leading to a transparent Curry-Howard link with ramified recurrence 
\cite{BellantoniC92,Leivant-ramified-rec}
as well as ramified corecurrence \cite{RamyaaL11}.

\small


\bibliographystyle{plain}
\bibliography{x}

\end{document}